\documentclass[pdflatex,sn-mathphys-num]{sn-jnl}

\usepackage{placeins}
\usepackage{graphicx}%
\usepackage{multirow}%
\usepackage{amsmath,amssymb,amsfonts}%
\usepackage{amsthm}%
\usepackage{mathrsfs}%
\usepackage[title]{appendix}%
\usepackage{xcolor}%
\usepackage{textcomp}%
\usepackage{manyfoot}%
\usepackage{booktabs}%
\usepackage{algorithm}%
\usepackage{algorithmicx}%
\usepackage{algpseudocode}%
\usepackage{listings}%
\usepackage{subcaption}
\usepackage{bm}
\usepackage{color}
\usepackage{epstopdf}
\usepackage{subcaption}
\usepackage{multirow}
\usepackage{dsfont}
\usepackage{booktabs}
\usepackage{longtable}

\theoremstyle{thmstyleone}%
\newtheorem{theorem}{Theorem}
\newtheorem{proposition}[theorem]{Proposition}%

\theoremstyle{thmstyletwo}%
\newtheorem{remark}{Remark}%

\theoremstyle{thmstylethree}%

\begin{document}

\title[Structure-Aware Variational State Preparation for Quantum Basket Option Pricing]{Structure-Aware Variational State Preparation for Quantum Basket Option Pricing}


\author[1]{\fnm{Dongwoo} \sur{Kim}}\email{dkim27@stevens.edu}

\author[1]{\fnm{Zhenyu} \sur{Cui}}\email{zcui6@stevens.edu}
\author[2,3,4]{\fnm{Daniel K.} \sur{Park}}\email{dkd.park@yonsei.ac.kr}
\author*[1]{\fnm{Chihoon} \sur{Lee}}\email{clee4@stevens.edu}

\affil*[1]{\orgdiv{School of Business}, \orgname{Stevens Institute of Technology}, \orgaddress{ \city{Hoboken}, \postcode{07030}, \state{New Jersey}, \country{USA}}}

\affil[2]{\orgdiv{Department of Statistics and Data Science}, \orgname{Yonsei University}, \orgaddress{\postcode{03722}, \state{Seoul}, \country{South Korea}}}

\affil[3]{\orgdiv{Department of Applied Statistics}, \orgname{Yonsei University}, \orgaddress{\postcode{03722}, \state{Seoul}, \country{South Korea}}}

\affil[4]{\orgdiv{Department of Quantum Information}, \orgname{Yonsei University}, \orgaddress{\postcode{03722}, \state{Seoul}, \country{South Korea}}}

\abstract{Basket option pricing often relies on Monte Carlo estimation, for which quantum
    amplitude estimation (QAE) provides a quadratic speed-up. However, the practical
    benefit of QAE can be limited by the depth of the state-preparation circuit. We
    propose a structure-aware quantum state-preparation framework for QAE-based
    basket option pricing. The framework uses tensor-train (TT) rank information to design shallow
    variational state-preparation circuits. In the independent regime, TT ranks remove
    unnecessary entangling links from a hardware-efficient ansatz. In correlated
    basket settings, we instead prepare asset-wise marginals locally and train a
    compact latent block to match the basket cumulative distribution function. The Basket-CDF objective targets the basket pushforward distribution rather than
    the full joint state, directly aligning state preparation with basket-dependent
    payoffs. Numerical experiments show that the proposed circuits replace the exponential
    state-preparation depth scaling of exact amplitude loading with linear scaling,
    while maintaining low-percent basket-pricing errors. Additional sampling-based training experiments and an end-to-end QAE integration study support compatibility with sample-estimated training and standard QAE-based pricing workflows.}

\keywords{quantum finance, basket option pricing, quantum state preparation, tensor-train decomposition, variational quantum circuits, amplitude estimation, circuit depth reduction}



\maketitle

\section{Introduction}\label{sec1}

    Pricing multidimensional financial derivatives often requires repeated
    estimation of expectations under probability distributions. Classical Monte
    Carlo methods remain a standard tool for this task, especially in multi-asset
    settings where deterministic grid-based methods suffer from the curse of dimensionality.
    Quantum amplitude estimation (QAE) has therefore attracted  interest
    in quantum finance, since it provides a quadratic reduction in sampling
    complexity relative to classical Monte Carlo under suitable oracle assumptions
    \cite{brassard2002quantum}. This potential has motivated QAE-based algorithms
    for derivative pricing, including European options, basket options, and more
    general quantum-accelerated Monte Carlo pipelines
    \cite{woerner2019quantum,stamatopoulos2020option,herman2026quantum,yu2026quantum,blank2021quantum}.
    
    In financial applications of QAE, the discretized asset-price distribution
    must first be loaded into an amplitude-encoded probability state, which we
    refer to as the uncertainty state, before payoff encoding and amplitude estimation can be applied~\cite{woerner2019quantum,stamatopoulos2020option}.

    In multi-asset settings, this loading step becomes costly because the input
    state must represent a high-dimensional joint distribution and, in correlated
    baskets, its cross-asset dependence structure. When the assets
    are correlated, the state-preparation circuit must also encode cross-asset
    dependence. Exact
    amplitude loading can therefore lead to state-preparation circuits whose depth
    dominates the end-to-end pricing workflow.
    
    This bottleneck is particularly important because the uncertainty state must be
    prepared coherently whenever the pricing oracle is invoked. Thus, even if the
    amplitude-estimation stage is replaced by lower-depth variants
    \cite{suzuki2020amplitude,grinko2021iterative}, the overall
    workflow may still be limited by the depth of the state-preparation block. More
    broadly, input-preparation costs must be treated as first-order resource costs
    rather than secondary implementation details \cite{aaronson2015read}. On
    near-term or realistically noisy hardware, deep circuits further accumulate
    two-qubit gate errors, routing overhead, idle time, and decoherence
    \cite{preskill2018quantum,murali2019noise}. Efficient preparation of financial
    input distributions is therefore a central requirement for practical quantum
    Monte Carlo methods in derivative pricing and risk estimation
    \cite{carrera2021efficient,gomez2022survey,rattew2021efficient}.
    
    Several approaches have been proposed to make quantum state preparation more
    practical. Generative-model-based methods, including quantum generative adversarial
    networks and quantum circuit Born machines, learn approximate state-preparation
    circuits from probability data or samples
    \cite{benedetti2019generative,zoufal2019quantum}, and have been explored in
    quantum finance applications \cite{alcazar2022quantum}. These methods are flexible and
    hardware-compatible, but their performance depends strongly on the chosen
    ansatz, optimizer, and training objective. In particular, adversarial generative
    training can suffer from instability and mode-collapse behavior, while recent
    analyses of quantum generative models identify barren plateaus and loss
    concentration as potential trainability barriers
    \cite{tian2025quantum,rudolph2024trainability}. Tensor-network methods provide a more structured alternative. Matrix-product-state (MPS)
    and tensor-train (TT) representations can exploit low-rank or low-entanglement
    structure in the target amplitude tensor
    \cite{oseledets2011tensor,ran2020encoding}. MPS-based state
    preparation has been studied both generally and in QAE-based Monte Carlo contexts
    \cite{iaconis2024quantum,pereira2024encoding}. Related work also suggests that tensor-network representations can be combined
    with variational state-preparation strategies \cite{melnikov2023quantum}. This motivates using tensor structure as a structural guide for designing
    shallow variational circuits. In independent regimes, this idea is especially
    effective. TT ranks identify factorized cuts, so redundant entangling links can
    be removed without sacrificing the local expressivity needed to prepare each
    asset distribution. However, in correlated multi-asset settings, this sparsification benefit can
    largely disappear.

For correlated basket settings, accurate full joint-state preparation can be
unnecessarily demanding. When correlations make the TT ranks nontrivial across
most inter-asset cuts, the TT-informed entangling rule retains nearly all
nearest-neighbor links and effectively reduces to a linearly entangled
hardware-efficient ansatz. Although this remains much shallower than exact
amplitude loading, a shallow linear ansatz may not provide enough expressivity
to represent the full cross-asset dependence structure. Basket option prices,
however, depend on the asset vector only through the distribution of the basket
projection \cite{caldana2016general}. This suggests a separated construction:
use TT-informed circuits locally to prepare asset-wise marginals, and use an
additional latent register to learn cross-asset dependence only through its
effect on the basket distribution.

This paper develops such a structure-aware state-preparation framework for
reducing the circuit depth of distribution loading in QAE-based basket option pricing. The proposed marginal--latent Basket-CDF loader first prepares asset-wise marginals using TT-informed variational circuits, and then trains a compact latent dependence block to match the cumulative distribution function of the basket projection. This Basket-CDF objective is aligned with the pricing task, avoids full-state-fidelity supervision, and yields a circuit that can be reused across strikes and other basket-dependent functionals. More broadly, the same basket-projection viewpoint may be relevant for
portfolio-level risk functionals, such as tail probabilities, value-at-risk (VaR), and conditional value-at-risk (CVaR), whenever the target quantity is determined by the distribution of an aggregate portfolio value or loss. It also complements recent quantum option-pricing pipelines that incorporate
payoff information directly into the encoding and amplitude-estimation
procedure \cite{manzano2025alternative}.

The main contributions of this paper are threefold. First, we introduce a
marginal-TT latent Basket-CDF state-preparation circuit for correlated basket
settings. The construction prepares target asset-wise marginals locally, freezes
these marginal loaders, and trains a compact latent dependence block using a
basket-CDF objective with marginal regularization. This separates marginal
modeling from basket-relevant dependence learning and avoids full-state-fidelity
supervision. Second, we introduce the TT-informed variational state-preparation primitive
used for the local marginal loaders. The primitive uses TT-rank information as a
structural prior for entangling-topology design, rather than directly compiling
a full tensor-network representation. In independent multi-asset settings, the
same rule removes unnecessary cross-asset entanglers and yields a shallow
full-register validation case. Third, we analyze the state-preparation depth of the proposed constructions and
compare them with exact amplitude loading and direct TT/MPS baselines. Together
with basket-pushforward sufficiency and a CDF-based basket-call error bound,
these results show how tensor structure and basket-projection information can
be combined to construct shallow state-preparation circuits while preserving
downstream pricing accuracy.
    
    The remainder of this paper is organized as follows. Section~\ref{sec:background} reviews
    multi-asset uncertainty-state preparation, basket-structured financial
    functionals, hardware-efficient ansatzes, and tensor-train ranks. Section~\ref{sec:method}
    introduces the TT-informed state-preparation ansatz and the marginal-TT latent
    Basket-CDF construction. Section~\ref{sec:complexity} analyzes circuit-depth scaling and resource
    trade-offs. Section~\ref{sec:numerical_results} presents numerical experiments for independent and
    correlated multi-asset basket settings. Section~\ref{sec:discuss} concludes with limitations and
    future directions.

    \section{Background and Problem Formulation}
    \label{sec:background}
    
    \subsection{Multi-asset uncertainty-state preparation }
    \label{subsec:multi_asset_loading}
    
    We consider a discretized multi-asset financial model with $d$ underlying assets.
    Each asset is represented by $q$ uncertainty qubits, so that the local grid size is
    $m=2^q$ and the full asset register contains
    $n = dq$ qubits. A grid configuration is denoted by
    \[
        \bm{x}=(x_1,\ldots,x_d),
        \qquad x_i \in \{0,\ldots,m-1\},
    \]
    and the corresponding discretized joint probability mass function is denoted by
    $p(\bm{x})$. For pricing experiments, $p$ is interpreted as the discretized terminal distribution under the risk-neutral pricing measure.
    
    In amplitude-encoded quantum algorithms for finance, this distribution is represented
    by the uncertainty state
    \begin{equation}
        |\psi_p\rangle
        =
        \sum_{\bm{x}}
        \sqrt{p(\bm{x})}\,|\bm{x}\rangle .
        \label{eq:amplitude_encoding}
    \end{equation}
    Such uncertainty-state preparation is a standard component of quantum algorithms for
    risk analysis and derivative pricing based on amplitude estimation
    \citep{woerner2019quantum,stamatopoulos2020option}. Quantum approaches to derivative
    pricing have also been studied beyond simple one-dimensional Black--Scholes settings,
    including local-volatility models, Bermudan options, and alternative quantum-accelerated
    Monte Carlo pipelines
    \citep{kaneko2022quantum,miyamoto2022bermudan,manzano2025alternative}.
    
    This paper focuses on the uncertainty state preparation part of this workflow. In a
    generic multi-asset setting, the joint probability table has size
    \[
        m^d = 2^{dq},
    \]
    and exact amplitude loading, used here as exact state preparation of the full
    amplitude vector, can therefore become a dominant cost. This input-state preparation
    issue is a well-known obstacle in quantum algorithms that rely on classical data encoded
    into quantum states \citep{aaronson2015read,zoufal2019quantum,carrera2021efficient}. It is particularly
    important in quantum finance, where the same uncertainty state may be reused across
    many payoff evaluations but must still be prepared coherently each time the pricing
    oracle is invoked.
    
    These considerations motivate the use of structural information in the design of
    state-preparation circuits. Two types of structures will be important below. The first is
    tensor structure in the amplitude tensor, summarized by tensor-train ranks. The second
    is financial structure induced by basket projections, which can reduce the relevant
    distributional target from the full joint law to a lower-dimensional pushforward
    distribution.
    
    \subsection{Basket-structured financial functionals}
    \label{subsec:basket_functionals}
    
    We focus on basket-dependent financial functionals, where the payoff is obtained
    by first aggregating the multi-asset terminal state through a basket map and
    then applying a scalar payoff or risk function. Let
    $S_i(x_i)$ denote the terminal value of asset $i$ at local grid point $x_i$, and let
    $w_i$ be basket weights satisfying $\sum_{i=1}^d w_i=1$. We define the basket map
    \begin{equation}
        B(\bm{x})
        =
        \sum_{i=1}^{d} w_i S_i(x_i).
        \label{eq:basket_projection}
    \end{equation}
    Such basket-dependent functionals can then be written as
    \begin{equation}
        V_h
        =
        e^{-rT}\,
        \mathbb{E}_{\bm{x}\sim p}
        \left[
            h(B(\bm{x}))
        \right],
        \label{eq:basket_functional}
    \end{equation}
    where $h$ is a scalar payoff, risk, or indicator function. Here \(r\) is the continuously compounded risk-free rate and \(T\) is the
    maturity. The factor \(e^{-rT}\) discounts the terminal payoff expectation back
    to time zero.
    
    A basket option is a derivative whose payoff depends on a weighted portfolio,
or basket, of several underlying assets rather than on a single asset
\cite{caldana2016general}. For a basket call with strike $K$,
\[
    h_K(b) = (b-K)^+,
\]
so that
\begin{equation}
    V_K
    =
    e^{-rT}\,
    \mathbb{E}_{\bm{x}\sim p}
    \left[
        (B(\bm{x})-K)^+
    \right].
    \label{eq:basket_call}
\end{equation}
The same basket-functional formulation also covers basket puts, threshold
probabilities, index-linked payoffs, and risk measures based on the distribution
of a portfolio value. Thus, basket option pricing is used in this paper as a
benchmark application, but the underlying state-preparation problem is more
general. The objective is to prepare a quantum state that preserves the
distributional information relevant to basket-dependent functionals.
    
    In a standard quantum-pricing pipeline, the expectation in \eqref{eq:basket_functional}
    is estimated by combining uncertainty-state preparation, payoff encoding, and amplitude
    estimation \citep{brassard2002quantum, woerner2019quantum, stamatopoulos2020option}.
    Several lower-depth or modified amplitude-estimation variants have been proposed to
    reduce the resource requirements of the estimation stage
    \citep{suzuki2020amplitude,grinko2021iterative,manzano2023real,miyamoto2024bias}.
    Here, however, we treat amplitude estimation as a downstream estimation subroutine and
    focus on the state-preparation problem.
    
    The basket structure suggests that full joint-state fidelity is not always the most
    task-aligned target. A loader that approximates the full joint distribution may spend
    circuit resources on directions of the joint law that do not affect
    \eqref{eq:basket_functional}. Conversely, a loader that accurately reproduces the
    distribution of the scalar random variable $B(\bm{x})$ can be accurate for a broad
    family of downstream basket-structured quantities even if it does not recover the full
    joint distribution pointwise.
    
    For later use, we denote by
    \[
        B_{\#}p
    \]
    the pushforward distribution of $p$ under the basket map $B$. Equivalently, for any
    Borel set $A\subset \mathbb{R}$,
    \[
        (B_{\#}p)(A)
        =
        p\!\left(\{\bm{x}: B(\bm{x})\in A\}\right).
    \]
    This notation separates the full joint distribution $p$ from the one-dimensional
    distribution of the basket value. Section~\ref{sec:method} uses this distinction to
    formulate a projection-based loading objective.
    
    \subsection{Hardware-efficient ansatzes for quantum state preparation}
    \label{subsec:hea}
    
    Hardware-efficient ansatzes are variational circuits constructed from repeated layers
    of single-qubit rotations and entangling gates, typically chosen to be compatible with
    hardware-native gate sets and limited connectivity
    \citep{kandala2017hardware,cerezo2021variational}. In state-preparation problems, an
    ansatz may be viewed as being specified by three main design choices: the local rotation
    block, the entangling topology, and the number of layers.
    
    These choices play different roles. The rotation block controls local amplitude
    adjustments, while the entangling topology determines how correlations are distributed
    across the qubit register. Increasing the number of layers or using a denser entangling
    graph enlarges the variational family, but it also increases two-qubit gate count and
    transpiled circuit depth. Since two-qubit gates, routing overhead, and idle-time
    decoherence are among the main practical costs in near-term implementations
    \citep{preskill2018quantum,murali2019noise}, the entangling topology
    should be treated as a structural modeling choice rather than a minor implementation
    detail.
    
    The distribution-loading setting considered here has an additional simplification:
    the target amplitudes in \eqref{eq:amplitude_encoding} are real and nonnegative.
    Accordingly, the central challenge is not to represent arbitrary complex phases, but
    to approximate the amplitude profile and its induced correlation structure. This makes
    real-amplitude hardware-efficient circuits based on $R_y$ rotations and CNOT entanglers
    a natural variational template. The key question is then where entangling resources
    should be placed.
    
    A generic entangling pattern may allocate two-qubit gates inefficiently when the target
    distribution has independent, localized, or otherwise nonuniform dependence structure.
    The TT-informed construction in Section~\ref{subsec:tt_informed_hea} addresses this issue
    by using tensor-train rank information to determine which nearest-neighbor entangling
    links are structurally nontrivial.
    
    \subsection{Tensor-train ranks as structural information}
    \label{subsec:tt}
    
    Tensor-train (TT) decomposition provides a compact representation of a high-dimensional
    tensor by factorizing it into a sequence of low-order cores
    \citep{oseledets2011tensor}. For a tensor $A(i_1,\ldots,i_D)$, the TT representation is
    \begin{equation}
        A(i_1,\ldots,i_D)
        =
        G^{(1)}(i_1)
        G^{(2)}(i_2)
        \cdots
        G^{(D)}(i_D),
        \label{eq:tt_decomposition}
    \end{equation}
    where each $G^{(k)}(i_k)$ is a matrix-valued core, except at the boundaries. The
    intermediate dimensions
    \begin{equation}
        r_1,\ldots,r_{D-1}
        \label{eq:tt_rank_profile}
    \end{equation}
    are the TT ranks.
    
    The TT-rank profile carries structural information about the tensor. Each rank $r_k$
    measures the complexity across the consecutive bipartition
    \begin{equation}
        (i_1,\ldots,i_k)\,|\,(i_{k+1},\ldots,i_D).
        \label{eq:tt_consecutive_cut}
    \end{equation}
    A trivial rank indicates that the tensor factorizes across the corresponding cut,
    whereas a larger rank indicates nontrivial dependence between the two sides of the
    tensor chain.
    
    For quantum state preparation, the amplitude vector
    $\{\sqrt{p(\bm{x})}\}_{\bm{x}}$ can be reshaped into a qubit-index tensor,
    \begin{equation}
        A(i_1,\ldots,i_n)
        =
        \sqrt{p(i_1,\ldots,i_n)},
        \qquad
        i_j\in\{0,1\},
        \label{eq:qubit_index_amplitude_tensor}
    \end{equation}
    where the multi-index $(i_1,\ldots,i_n)$ represents the computational-basis state
    under a chosen qubit ordering. The resulting TT ranks summarize how amplitude
    correlations are distributed along this ordering. Tensor-network and
    matrix-product-state ideas have been used in several quantum state-preparation
    settings
    \citep{ran2020encoding,melnikov2023quantum,iaconis2024quantum,pereira2024encoding}.
    
    Since the TT-rank profile depends on the ordering of tensor modes, qubit ordering
    becomes part of the circuit-design problem. This is consistent with the broader
    observation that correlation-informed qubit permutations can reduce ansatz depth in
    variational quantum circuits \citep{tkachenko2021correlation}. The methods below use
    TT information as a structural prior rather than as a direct state-synthesis prescription.

    \section{Methodology}
    \label{sec:method}
    
    This section presents the proposed Basket-CDF state-preparation framework. We first describe the TT-informed marginal primitive and then introduce the marginal-TT + latent Basket-CDF loader for correlated basket settings.

    \subsection{TT-informed state-preparation primitive}
    \label{subsec:tt_informed_hea}
    
    We first describe the TT-informed state-preparation primitive used for the
    asset-wise marginal circuits in the proposed Basket-CDF framework. The same
    construction can also be applied to the full asset register in the independent
    regime, where the joint distribution factorizes across asset blocks. In
    correlated basket settings, however, we use this primitive locally for marginal
    preparation and learn cross-asset dependence through the latent Basket-CDF
    construction introduced in the next subsection.
    
    Consider a target probability distribution $p$ on a $N$-qubit register, with
    amplitude-encoded state
    \[
        |\psi_p\rangle
        =
        \sum_x \sqrt{p(x)}\,|x\rangle .
    \]
    For the marginal circuits used later, $N=q$ and $p=p_a$ is the one-asset
    marginal distribution of asset $a$. We reshape the amplitude vector
    $\{\sqrt{p(x)}\}_x$ into a qubit-index tensor of shape $2\times\cdots\times 2$
    and apply TT-SVD to obtain the rank profile $(r_1,\ldots,r_{N-1})$. We use this
    TT-rank profile only as a structural prior for the variational ansatz, rather
    than directly compiling the tensor-network representation into a circuit.
    
    The TT-rank profile defines the nearest-neighbor entangling set
    \begin{equation}
        E_{\mathrm{TT}}
        =
        \{(k-1,k): r_k>1,\; k=1,\ldots,N-1\}.
        \label{eq:ett}
    \end{equation}
    Thus, an entangling link is retained only when the corresponding TT rank is
    nontrivial. Based on this graph, we define a $L$-layer real-amplitude
    hardware-efficient ansatz
    \begin{equation}
        U_{\mathrm{TT}}(\theta)
        =
        \prod_{\ell=1}^{L}
        \left(
            U_{\mathrm{ent}}^{(\ell)}(E_{\mathrm{TT}})
            U_{\mathrm{rot}}^{(\ell)}(\theta)
        \right),
        \label{eq:tt_informed_ansatz}
    \end{equation}
    where
    \begin{equation}
        U_{\mathrm{rot}}^{(\ell)}(\theta)
        =
        \prod_{j=1}^{N} R_y(\theta_{j,\ell}),
        \qquad
        U_{\mathrm{ent}}^{(\ell)}(E_{\mathrm{TT}})
        =
        \prod_{(i,j)\in E_{\mathrm{TT}}}
        \mathrm{CNOT}_{i\rightarrow j}.
        \label{eq:tt_layers}
    \end{equation}
    The use of $R_y$ rotations is matched to the real nonnegative amplitude
    structure of probability loading. The main structural design variable is
    therefore the entangling graph.
    
    The primitive is trained by minimizing the fidelity loss
    \begin{equation}
        \mathcal{L}_{\mathrm{fid}}(\theta)
        =
        1-
        \left|
            \langle \psi_p|
            U_{\mathrm{TT}}(\theta)|0\rangle^{\otimes N}
        \right|^2 .
        \label{eq:full_state_fidelity_loss}
    \end{equation}
    In the proposed Basket-CDF loader, this fidelity objective is used only for the
    low-dimensional asset-wise marginal circuits. The latent dependence block is
    trained by the Basket-CDF objective introduced below and does not use full-state
    fidelity.
    
    By the TT-rank interpretation in Section~\ref{subsec:tt}, a rank-one
    consecutive cut indicates factorization across that cut. The following result
    connects this observation to the TT-informed entangling rule
    \eqref{eq:ett}.
    
    \begin{proposition}[Product structure and TT-informed entanglers]
    \label{prop:product_tt_rank}
    Suppose that the target amplitude tensor factorizes across a consecutive cut,
    \[
        A(i_1,\ldots,i_N)
        =
        A_L(i_1,\ldots,i_k)
        A_R(i_{k+1},\ldots,i_N).
    \]
    Then the TT rank across the cut
    \[
        (i_1,\ldots,i_k)\,|\,(i_{k+1},\ldots,i_N)
    \]
    is equal to one. Consequently, the TT-informed entangling rule
    \eqref{eq:ett} removes the nearest-neighbor entangling edge associated with
    this cut.
    \end{proposition}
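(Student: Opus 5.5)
The plan is to identify the TT rank across a consecutive cut with the matrix rank of the corresponding unfolding, and then observe that a product factorization makes that unfolding an outer product. Throughout, I use the standard fact about TT-SVD from \citep{oseledets2011tensor}: the rank $r_k$ produced by (exact) TT-SVD equals the rank of the $k$-th unfolding matrix
\[
    A^{\langle k\rangle} \in \mathbb{R}^{2^k\times 2^{N-k}},
    \qquad
    A^{\langle k\rangle}\big((i_1,\ldots,i_k),(i_{k+1},\ldots,i_N)\big)=A(i_1,\ldots,i_N).
\]
Rows are indexed by the left multi-index and columns by the right multi-index. This is exactly the quantity that Section~\ref{subsec:tt} describes as measuring complexity across the cut \eqref{eq:tt_consecutive_cut}.

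\textbf{Step 1: the unfolding is an outer product.} Define vectors $a\in\mathbb{R}^{2^k}$ and $b\in\mathbb{R}^{2^{N-k}}$ by $a_{(i_1,\ldots,i_k)}=A_L(i_1,\ldots,i_k)$ and $b_{(i_{k+1},\ldots,i_N)}=A_R(i_{k+1},\ldots,i_N)$. The factorization hypothesis then reads $A^{\langle k\rangle}=ab^{\top}$, so $\operatorname{rank}A^{\langle k\rangle}\le 1$.

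\textbf{Step 2: the rank is not zero.} Since $A$ is the amplitude tensor of a normalized state, we have $\sum A^2=\sum_x p(x)=1$. Hence $A\neq 0$, so $a\neq0$ and $b\neq0$, and the rank is exactly $1$. It follows that $r_k=1$.

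\textbf{Step 3: conclude about the entangler.} By \eqref{eq:ett}, the link $(k-1,k)$ belongs to $E_{\mathrm{TT}}$ only if $r_k>1$, so this edge is excluded. Consequently no layer $U^{(\ell)}_{\mathrm{ent}}$ in \eqref{eq:tt_layers} contains a CNOT for this cut.

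The main point needing care is Step 3's indexing, together with the justification that TT-SVD returns the unfolding rank. On the first point, I would note explicitly that the edge labelled $(k-1,k)$ in \eqref{eq:ett} uses $0$-based qubit labels, so it is the nearest-neighbor pair straddling the cut between $i_k$ and $i_{k+1}$ in the $1$-based mode labels. On the second point, I would recall that TT-SVD computes $r_k$ as the rank of successive reshapes. These reshapes have the same column space structure as $A^{\langle k\rangle}$, and exact TT-SVD attains the minimal ranks $r_k=\operatorname{rank}A^{\langle k\rangle}$. If a truncation tolerance is used in practice, the rank-one unfolding still yields a single nonzero singular value, so the conclusion is unchanged.
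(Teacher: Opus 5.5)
Your proposal is correct and follows the paper's proof: the paper likewise notes that the unfolding across the cut is the outer product of the vectorized $A_L$ and $A_R$ and that the TT rank equals the unfolding rank, so the edge is excluded by \eqref{eq:ett}. Your nonvanishing check (so the rank is exactly one rather than at most one) and your remark on the $0$-based edge labelling are small refinements the paper leaves implicit.
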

    
    \begin{proof}
    The unfolding of $A$ across the stated cut is the outer product of the
    vectorized tensors $A_L$ and $A_R$, and therefore has matrix rank one. Since the
    TT rank across a consecutive cut equals the rank of the corresponding unfolding,
    the TT rank is one. The associated edge is therefore excluded by
    \eqref{eq:ett}.
    \end{proof}
    
    In the independent multi-asset regime,
    \[
        p(\bm{x})=\prod_{a=1}^{d}p_a(x_a),
        \qquad
        \sqrt{p(\bm{x})}
        =
        \prod_{a=1}^{d}\sqrt{p_a(x_a)}.
    \]
    Thus, inter-asset cuts have TT rank one and do not require cross-asset
    entanglers. This independent full-register case is used only as a validation
of the TT-informed primitive, and the      corresponding numerical results are
reported in Appendix~\ref{app:tt_independent}. In the main correlated-basket
construction, the same TT-informed rule is applied locally to the asset-wise
marginal loaders introduced below.
    
  \begin{figure}[t]
    \centering \includegraphics[width=1.0\linewidth] {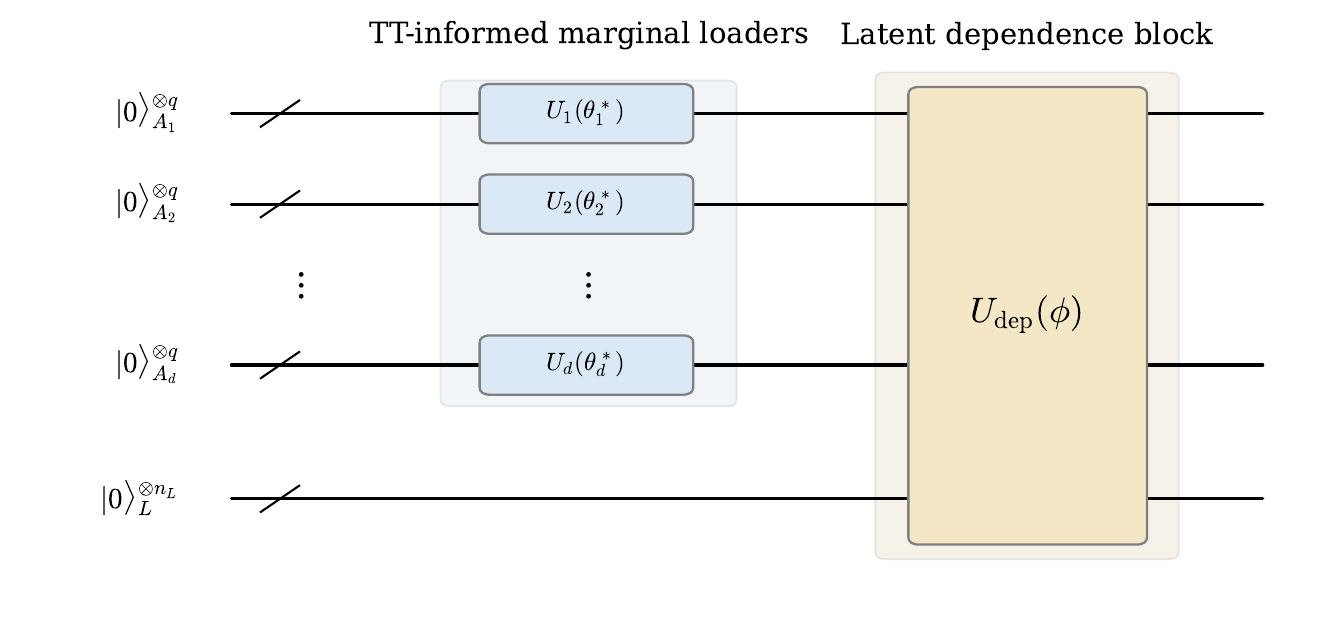}
    \caption{General circuit architecture of the marginal-TT loader with a latent dependence block. The registers $A_1,\ldots,A_d$ denote the $d$ asset registers, with $q$ uncertainty qubits assigned to each asset, and $L$ denotes an auxiliary latent register with $n_L$ qubits. $U_a(\theta_a^\ast)$ is a TT-informed marginal loader trained to prepare the marginal distribution of asset $a$. The latent dependence block $U_{\mathrm{dep}}(\phi)$ is then applied to the prepared asset registers and the latent register. If the final state is $|\Psi_\phi\rangle$, then the asset-register measurement distribution is obtained by measuring in the computational basis and marginalizing over the latent outcome: $\widetilde p_\phi(\bm{x})=\sum_{\ell} |\langle \bm{x},\ell|\Psi_\phi\rangle|^2$.}
\label{fig:general_loader}
\end{figure}
    
    \subsection{Marginal-TT Loader with a Latent Dependence Block}
    \label{subsec:marginal_latent_loader}
    
    For each asset $a=1,\ldots,d$, define the asset-wise marginal distribution
    \begin{equation}
        p_a(x_a)
        =
        \sum_{\bm{x}_{-a}} p(\bm{x}),
        \label{eq:asset_marginal}
    \end{equation}
    where $\bm{x}_{-a}$ denotes all asset coordinates except $x_a$. When the full
    target probability table is available, this marginal is obtained by reshaping
    the probability vector into a $d$-way tensor of shape
    \[
        (2^q,\ldots,2^q)
    \]
    and summing out all tensor axes except the axis corresponding to asset $a$.
    Equivalently,
    \begin{equation}
        p_a(x_a)
        =
        \sum_{x_1=0}^{2^q-1}
        \cdots
        \sum_{x_{a-1}=0}^{2^q-1}
        \sum_{x_{a+1}=0}^{2^q-1}
        \cdots
        \sum_{x_d=0}^{2^q-1}
        p(x_1,\ldots,x_d).
        \label{eq:asset_marginal_explicit}
    \end{equation}
    Thus, the marginal stage keeps the calibrated single-asset distributions of the
    target joint law while discarding cross-asset dependence at initialization.
    
    The corresponding product-marginal distribution is
    \begin{equation}
        p_{\mathrm{prod}}(\bm{x})
        =
        \prod_{a=1}^{d} p_a(x_a).
        \label{eq:product_marginal_distribution}
    \end{equation}
    By construction, $p_{\mathrm{prod}}$ has the same asset-wise marginals as
    $p$, but it does not contain the cross-asset dependence structure of the target
    joint distribution.
    
    The TT-informed ansatz of Section~\ref{subsec:tt_informed_hea} is applied
    locally to each marginal $p_a$. Let $U_a(\theta_a)$ denote the trained
    $q$-qubit marginal loader satisfying
    \begin{equation}
        U_a(\theta_a)|0\rangle^{\otimes q}
        \approx
        \sum_{x_a}
        \sqrt{p_a(x_a)}\,|x_a\rangle .
        \label{eq:asset_marginal_loader}
    \end{equation}
    The product-marginal initialization circuit is then
    \begin{equation}
        U_{\mathrm{marg}}
        =
        \bigotimes_{a=1}^{d} U_a(\theta_a).
        \label{eq:product_marginal_circuit}
    \end{equation}
    Although $p_{\mathrm{prod}}$ is a distribution over the full $2^{dq}$-point
    grid, it is represented through $d$ marginal distributions of size $2^q$.
    Therefore, the classical distributional input for this stage scales as
    \[
        O(d2^q),
    \]
    rather than $O(2^{dq})$.
    
    This construction separates marginal modeling from dependence modeling in an
    operational sense. The circuits $U_a(\theta_a)$ are trained first to prepare the
    asset-wise marginals in \eqref{eq:asset_marginal} and are then frozen. Their
    tensor product \eqref{eq:product_marginal_circuit} initializes the asset
    register in an approximation of the product-marginal distribution
    \eqref{eq:product_marginal_distribution}. In the second stage, the marginal
    loaders are not retrained; only the parameters of the latent dependence block
    are optimized.
    
    In correlated regimes, $U_{\mathrm{marg}}$ is therefore not intended to
    represent the full joint law. It represents the product of the calibrated
    asset-wise marginals. Cross-asset dependence is introduced only after this
    initialization, through a trainable asset--latent interaction. This
    marginal--dependence separation is conceptually related to copula-based
    financial modeling and quantum copula-based risk aggregation, where marginal
    distributions and dependence structure are treated as distinct modeling
    components \citep{mori2024quantum, zhu2023copula, zhu2022generative}. We do not explicitly construct a copula in this work. Instead, the
    latent register is used as a circuit-level dependence-correction channel: the
    marginal circuits encode the asset-wise distributions, while the subsequent
    asset--latent interaction modifies the joint asset-register distribution in a
    way that is trained through the basket pushforward objective.
    
    Let $A$ denote the $dq$-qubit asset register and let $L$ denote an auxiliary
    latent register with $n_L$ qubits. Starting from the product-marginal state, the
    dependence block prepares
    \begin{equation}
        |\Psi_{\phi}\rangle
        =
        U_{\mathrm{dep}}(\phi)
        \left(
            U_{\mathrm{marg}}|0\rangle_A
            \otimes
            |0\rangle_L
        \right).
        \label{eq:latent_state}
    \end{equation}
    The induced distribution on the asset register is obtained by marginalizing over the latent register:
\begin{equation}
    \widetilde p_{\phi}(\bm{x})
    =
    \sum_{\ell\in\{0,1\}^{n_L}}
    \left|
        \langle \bm{x},\ell|\Psi_{\phi}\rangle
    \right|^2 .
    \label{eq:latent_marginal_distribution}
\end{equation}
Thus, the latent block converts the product-marginal initialization into
a new asset-register distribution $\widetilde p_{\phi}$. The latent register is
used only as a circuit-level auxiliary channel for dependence correction; it is
not part of the financial payoff register and should not be interpreted as a
financial factor model.

Figure~\ref{fig:general_loader} summarizes the general circuit architecture
for an arbitrary number of assets. Each asset register is first prepared by a
TT-informed marginal loader, and the resulting product-marginal state is
passed, together with an auxiliary latent register, through the latent
dependence block. The latent outcome is marginalized out when computing the
asset-register measurement distribution. Figure~\ref{fig:marginal_latent_loader}
then shows a two-asset gate-level implementation example with $q=3$
uncertainty qubits per asset.

The Basket-CDF training objective introduced in the next subsection learns
this dependence correction through its effect on the induced basket
pushforward distribution, while a marginal penalty discourages distortion of
the prepared asset-wise marginals.
    
    \begin{figure}[t]
        \centering
        \includegraphics[width=\linewidth]{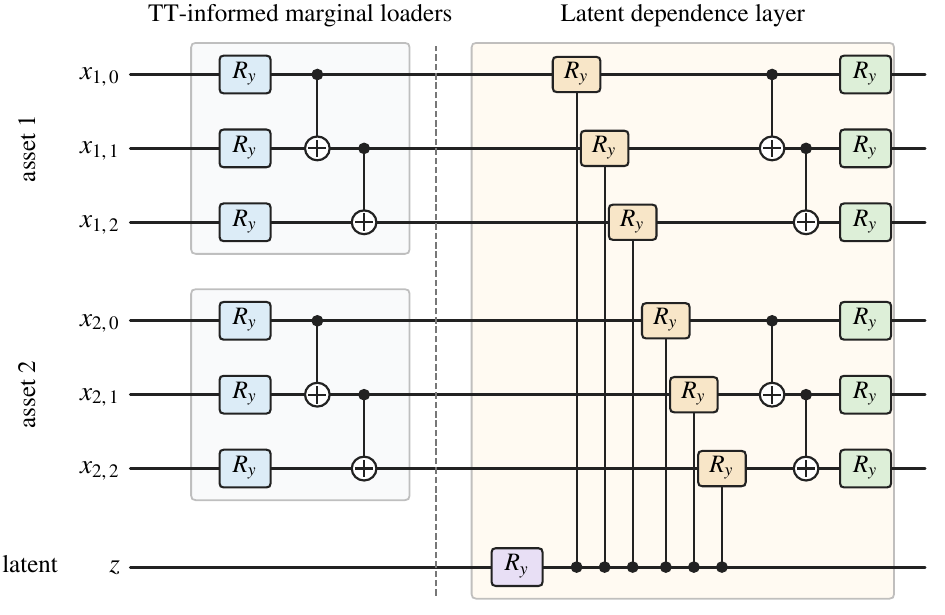}
        \caption{
        Marginal-TT + latent basket-projection loader for a two-asset example with $q=3$
        uncertainty qubits per asset. TT-informed marginal loaders act locally on each
        asset register, followed by a latent-conditioned dependence layer. The auxiliary
        latent qubit $z$ controls rotations on the asset qubits, while CNOT chains are
        restricted to individual asset registers.
    } \label{fig:marginal_latent_loader}
    \end{figure}
    
    In the numerical implementation, we use a single latent qubit unless stated
    otherwise. As illustrated in Figure~\ref{fig:marginal_latent_loader}, a
    dependence layer applies an $R_y$ rotation on the latent qubit,
    latent-controlled $R_y$ rotations from the latent qubit to the asset qubits,
    nearest-neighbor CNOT chains within each asset register, and local $R_y$
    rotations on the asset qubits. This defines a compact dependence-correction
    block whose number of variational angles and controlled rotations scales
    linearly in $dq$ for fixed latent-register size.

    \subsection{Basket-CDF objective and projection justification}
    \label{subsec:basket_cdf_objective}
    
    Let
    \begin{equation}
        B(\bm{x})
        =
        \sum_{a=1}^{d} w_a S_a(x_a),
        \qquad
        \sum_{a=1}^{d} w_a=1,
        \label{eq:method_basket_projection}
    \end{equation}
    be the basket projection. Using the pushforward notation introduced in Section~\ref{subsec:basket_functionals}, we denote by $B_{\#}p$ and $B_{\#}\widetilde p_{\phi}$ the basket-value distributions induced by the target asset-register distribution $p$ and the model asset-register distribution $\widetilde p_{\phi}$, respectively. Equivalently, these are the pushforward distributions of the corresponding asset-register laws under the basket map $B$.
    In the basket-option literature, the same object is commonly described as the
    distribution or law of the basket value, since the payoff is written on a weighted
    sum of the underlying assets \citep{caldana2016general}.
    
    The corresponding cumulative distribution functions are
    \begin{equation}
        F_p(b)
        =
        \Pr_{\bm{x}\sim p}[B(\bm{x})\le b],
        \qquad
        F_{\phi}(b)
        =
        \Pr_{\bm{x}\sim \widetilde p_{\phi}}[B(\bm{x})\le b].
        \label{eq:target_model_cdf}
    \end{equation}
    The basket-CDF loss matches these one-dimensional distributions:
    \begin{equation}
        \mathcal{L}_{\mathrm{CDF}}(\phi)
        =
        \frac{1}{J}
        \sum_{j=1}^{J}
        \left(
            F_{\phi}(b_j)-F_p(b_j)
        \right)^2,
        \label{eq:cdf_loss}
    \end{equation}
    where $\{b_j\}_{j=1}^{J}$ are basket-value grid points or bin boundaries. In the
    grid-based experiments, these points are chosen from the ordered distinct basket
    values induced by the discretized asset grid. In sample-based implementations, they
    may instead be chosen as fixed grid points or empirical quantiles. The loss in
    \eqref{eq:cdf_loss} is a discretized CDF-matching criterion, closely related in
    spirit to CDF-based scoring rules for probability distributions, such as the
    continuous ranked probability score
    \citep{matheson1976scoring,gneiting2007strictly}.
    
    The dependence block is trained using
    \begin{equation}
        \mathcal{L}_{\mathrm{BP}}(\phi)
        =
        \lambda_B \mathcal{L}_{\mathrm{CDF}}(\phi)
        +
        \lambda_M \mathcal{L}_{\mathrm{marg}}(\phi).
        \label{eq:basket_projection_loss}
    \end{equation}
    Here $\mathcal{L}_{\mathrm{marg}}$ penalizes distortion of the asset-wise marginals
    after the dependence block. More explicitly, if
    $\widetilde p_{\phi,a}$ denotes the $a$th asset-wise marginal of
    $\widetilde p_{\phi}$, then we use
    \begin{equation}
        \mathcal{L}_{\mathrm{marg}}(\phi)
        =
        \frac{1}{d}
        \sum_{a=1}^{d}
        \frac{1}{2^q}
        \sum_{x_a=0}^{2^q-1}
        \left(
            \widetilde p_{\phi,a}(x_a)-p_a(x_a)
        \right)^2.
        \label{eq:marginal_loss}
    \end{equation}
    The defining term is $\mathcal{L}_{\mathrm{CDF}}$, which matches the
    basket-projection distribution directly.
    
    This task-aligned perspective is related to recent payoff-aware QAE-based Monte Carlo pipelines.
    For example, ~\citet{manzano2025alternative} incorporate
    payoff-relevant information directly into the pricing circuit through payoff
    encoding and modified Real Quantum Amplitude Estimation. Our approach applies a
    complementary form of task alignment at the state-preparation stage. Rather than
    encoding a particular payoff value or optimizing a single option price, we match
    the distribution of the basket projection itself. The resulting loader remains
    strike-independent and can be reused across different strikes and other
    functionals of the form $\mathbb{E}[h(B)]$.
    
    For diagnostics, we may additionally report the classical fidelity
    \begin{equation}
        F(p,\widetilde p_{\phi})
        =
        \left(
            \sum_{\bm{x}}
            \sqrt{p(\bm{x})\widetilde p_{\phi}(\bm{x})}
        \right)^2 .
        \label{eq:classical_fidelity_diagnostic}
    \end{equation}
    This fidelity is not the defining objective of the basket-projection loader. It is
    used only to assess how closely the learned distribution agrees with the full joint
    law.
    
    The basket-projection objective is justified by the fact that any financial quantity
    depending on the asset vector only through $B(\bm{x})$ is fully determined by the
    pushforward distribution $B_{\#}p$.
    
    \begin{proposition}[Basket-pushforward sufficiency]
    \label{prop:basket_pushforward_sufficiency}
    Let $p$ and $r$ be two probability distributions on the same discretized asset grid.
    If
    \[
        B_{\#}p = B_{\#}r,
    \]
    then for any measurable scalar function $h$,
    \begin{equation}
        \mathbb{E}_{\bm{x}\sim p}[h(B(\bm{x}))]
        =
        \mathbb{E}_{\bm{x}\sim r}[h(B(\bm{x}))].
        \label{eq:pushforward_sufficiency}
    \end{equation}
    In particular, all basket-call prices are equal for all strikes $K$.
    \end{proposition}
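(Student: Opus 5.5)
The argument is a change of variables: an expectation of $h\circ B$ under a grid distribution depends on that distribution only through its pushforward under $B$. The setting is discrete, so everything reduces to finite sums and no measure-theoretic subtleties arise.

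\textbf{Step 1: rewrite the expectation as a sum over basket values.} The grid $\{0,\ldots,m-1\}^d$ is finite, so $B$ takes finitely many values; call the set of attained values $\mathcal{B}$. For any measurable $h$, the sum over $\bm{x}$ can be grouped by the value $b=B(\bm{x})$:
\[
\mathbb{E}_{\bm{x}\sim p}[h(B(\bm{x}))]=\sum_{\bm{x}}p(\bm{x})\,h(B(\bm{x}))=\sum_{b\in\mathcal{B}}h(b)\,p\bigl(\{\bm{x}:B(\bm{x})=b\}\bigr)=\sum_{b\in\mathcal{B}}h(b)\,(B_{\#}p)(\{b\}).
\]
Each singleton $\{b\}$ is Borel, so the last equality is just the definition of $B_{\#}p$ from Section~\ref{subsec:basket_functionals}. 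The same identity holds with $r$ in place of $p$.

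\textbf{Step 2: compare the two expectations.} The hypothesis $B_{\#}p=B_{\#}r$ gives $(B_{\#}p)(\{b\})=(B_{\#}r)(\{b\})$ for every $b\in\mathcal{B}$. Substituting into the two finite sums from Step~1 yields \eqref{eq:pushforward_sufficiency}. Finiteness of the grid ensures both sums are finite for every real-valued $h$, so no integrability condition is needed.

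\textbf{Step 3: basket calls.} For each strike $K$, apply Step~2 with $h=h_K$, where $h_K(b)=(b-K)^+$. This function is continuous and hence measurable, so the two undiscounted expectations agree. Multiplying by the common factor $e^{-rT}$ in \eqref{eq:basket_call} shows that the basket-call prices agree for every $K$.

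I expect no step to be a real obstacle. The only point needing care is the regrouping in Step~1, which requires that the level sets $\{\bm{x}:B(\bm{x})=b\}$, $b\in\mathcal{B}$, partition the grid. They do, because $B$ is a function, so each $\bm{x}$ lies in exactly one level set. For a general, non-discrete version one would instead quote the abstract change-of-variables formula $\int h\circ B\,dp=\int h\,d(B_{\#}p)$, but on the discretized grid the finite sum suffices.
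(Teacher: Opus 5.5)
Your proof is correct and takes the same change-of-variables route as the paper. The paper quotes $\mathbb{E}_{\bm{x}\sim p}[h(B(\bm{x}))]=\int h(b)\,d(B_{\#}p)(b)$ and then specializes to $h(b)=(b-K)^+$, while you verify that identity explicitly as a finite regrouping over the level sets of $B$, which is exactly what it reduces to on the discretized grid.
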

    
    \begin{proof}
    Since $h(B(\bm{x}))$ is the composition $h\circ B$, we have
    \[
        \mathbb{E}_{\bm{x}\sim p}[h(B(\bm{x}))]
        =
        \int h(b)\,d(B_{\#}p)(b).
    \]
    If $B_{\#}p=B_{\#}r$, then the same integral is obtained under $r$. Taking
    $h(b)=(b-K)^+$ gives the basket-call case.
    \end{proof}
    
    Proposition~\ref{prop:basket_pushforward_sufficiency} shows that full joint recovery
    is stronger than necessary for basket-structured financial quantities. The next
    result gives the corresponding error relation for basket calls.
    
    \begin{proposition}[CDF error and basket-call error]
    \label{prop:cdf_error_call_error}
    Assume that the basket value is supported on $[B_{\min},B_{\max}]$. Let $C_p(K)$ and
    $C_r(K)$ denote the undiscounted basket-call expectations under $p$ and $r$:
    \[
        C_p(K)=\mathbb{E}_{p}[(B-K)^+],
        \qquad
        C_r(K)=\mathbb{E}_{r}[(B-K)^+].
    \]
    If $K < B_{\max}$, then
    \begin{equation}
        |C_p(K)-C_r(K)|
        \le
        \int_K^{B_{\max}}
        |F_p(t)-F_r(t)|\,dt.
        \label{eq:cdf_call_bound}
    \end{equation}
    If $K\ge B_{\max}$, both call expectations are zero.
    \end{proposition}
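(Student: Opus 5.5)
The argument will rest on the layer-cake (tail-integral) representation of the call payoff. For any random variable $B$ supported on $[B_{\min},B_{\max}]$ and any $K$, we have pointwise
\[
(B-K)^+ = \int_K^{\infty} \mathds{1}\{B>t\}\,dt = \int_K^{B_{\max}} \mathds{1}\{B>t\}\,dt,
\]
where the upper limit may be truncated at $B_{\max}$ because $\mathds{1}\{B>t\}=0$ for $t\ge B_{\max}$.

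The first step is to take expectations under $p$ and apply Tonelli's theorem, which is legitimate since the integrand is nonnegative. This gives
\[
C_p(K)=\int_K^{B_{\max}} \bigl(1-F_p(t)\bigr)\,dt,
\]
and the same formula holds under $r$. On the discretized grid $B$ takes finitely many values, so the expectation is a finite sum and the interchange is elementary. I would still state the Tonelli argument so that the proof is not tied to the grid.

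The second step is to subtract the two representations. The constant terms $\int_K^{B_{\max}}1\,dt$ cancel, leaving
\[
C_p(K)-C_r(K)=\int_K^{B_{\max}} \bigl(F_r(t)-F_p(t)\bigr)\,dt .
\]
Bringing the absolute value inside the integral then yields \eqref{eq:cdf_call_bound}.

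The case $K<B_{\max}$ is the one where the interval $[K,B_{\max}]$ is nondegenerate, so the bound is meaningful. If $K<B_{\min}$, the representation still holds: for $t<B_{\min}$ we have $F_p(t)=F_r(t)=0$, so those values of $t$ contribute equally to both integrals and cancel. No separate case is needed.

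For $K\ge B_{\max}$, we have $B\le B_{\max}\le K$ almost surely under both laws. Hence $(B-K)^+=0$, and both call expectations vanish.

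The only delicate point is bookkeeping at the boundary, namely whether the integrand should use $\mathds{1}\{B>t\}$ or $\mathds{1}\{B\ge t\}$. The two differ only on the countable set of atoms of the discrete basket distribution, which has Lebesgue measure zero, so the choice does not affect the integral. Right-continuity of $F$ is not needed beyond this observation.
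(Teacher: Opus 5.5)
Your proof is correct and follows the same route as the paper: write each call value in tail-integral form, $\mathbb{E}[(B-K)^+]=\int_K^{B_{\max}}(1-F(t))\,dt$, subtract the two expressions, and move the absolute value inside the integral. The Tonelli justification and your remark that $\Pr(B>t)$ and $\Pr(B\ge t)$ differ only on a Lebesgue-null set are sound additions that fill in details the paper leaves implicit.
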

    
    \begin{proof}
    Using the tail representation of call payoffs,
    \[
        \mathbb{E}[(B-K)^+]
        =
        \int_K^{B_{\max}} \Pr(B>t)\,dt
        =
        \int_K^{B_{\max}} (1-F_B(t))\,dt
    \]
    for $K<B_{\max}$. Taking the difference between the two expectations and applying
    the triangle inequality gives \eqref{eq:cdf_call_bound}. If $K\ge B_{\max}$, the
    payoff is identically zero under both distributions.
    \end{proof}
    
    \begin{remark}[Resolution-error decomposition]
    \label{rem:resolution_error_decomposition}
    The CDF-based error bound also clarifies the effect of increasing the number of
    uncertainty qubits per asset. Let $V(K)$ denote the continuous basket-call value,
    let $V_q(K)$ denote the exact value under the $q$-qubit discretized target
    distribution, and let $V_{\phi,q}(K)$ denote the value induced by the
    Basket-CDF loader on the same discretized grid. Then
    \[
        |V(K)-V_{\phi,q}(K)|
        \le
        |V(K)-V_q(K)|
        +
        |V_q(K)-V_{\phi,q}(K)|.
    \]
    The first term is the discretization error. Since the basket-call payoff
    $(B-K)^+$ is Lipschitz in the basket value, this term decreases as the asset grid
    is refined, up to the truncation and quadrature error of the chosen
    discretization scheme.
    
    The second term is the loader-induced error on the discretized problem. By
    Proposition~\ref{prop:cdf_error_call_error},
    \[
        |V_q(K)-V_{\phi,q}(K)|
        \le
        e^{-rT}
        \int_K^{B_{\max}}
        |F_q(t)-F_{\phi,q}(t)|\,dt .
    \]
    Thus, for a fixed discretized problem, basket-call accuracy is controlled by the
    basket-CDF discrepancy rather than by full joint-distribution fidelity.
    
    This decomposition suggests why increasing $q$ can improve the proposed
    Basket-CDF loader: a larger $q$ gives a finer representation of the marginal
    asset grids and of the induced basket distribution. The improvement is not
    expected to be strictly monotone for every strike, dimension, or correlation
    structure, because changing $q$ also changes the discretized target distribution,
    the variational landscape, and the circuit size.
    \end{remark}
    
    Finally, full-state fidelity is useful as a diagnostic, but it is not necessary for
    basket-projection accuracy.

    \begin{remark}[Basket accuracy without full-state fidelity]\label{rem:payoff_without_fidelity}
    Full-state fidelity is only a diagnostic for the Basket-CDF loader. If two
    configurations $\bm{x}\neq\bm{y}$ have the same basket value,
    $B(\bm{x})=B(\bm{y})$, then moving probability mass between these two
    configurations leaves the basket pushforward distribution unchanged. Therefore,
    all quantities of the form $\mathbb{E}[h(B)]$ remain unchanged, even though the
    full joint distribution and hence the classical fidelity may change. This
    illustrates why optimizing the basket-projection loss can be sufficient for
    basket-structured pricing tasks.
    \end{remark}
    
    Together, Remarks~\ref{rem:resolution_error_decomposition}
    and~\ref{rem:payoff_without_fidelity} show that the Basket-CDF loader should be
    evaluated through the basket pushforward distribution rather than full-state
    fidelity. The former links CDF discrepancy to pricing error, while the latter
    explains why different joint distributions can induce the same basket-structured
    quantities. The complete Basket-CDF training pipeline is summarized in
    Algorithm~\ref{alg:basket_cdf_state_preparation}. 
    
    \begin{algorithm}[H]
    \caption{Two-stage Basket-CDF state-preparation procedure}
    \label{alg:basket_cdf_state_preparation}
    \begin{algorithmic}[1]
    \Statex \textbf{Input:} Target PMF or target samples; basket map $B$; CDF grid
    $\{b_j\}_{j=1}^{J}$; weights $\lambda_B,\lambda_M$; latent size $n_L$;
    depth $L_{\mathrm{dep}}$; optimizer budget $T_{\mathrm{opt}}$
    \Statex \textbf{Output:} Trained state-preparation circuit $U_{\mathrm{CDF}}$
    
    \State Compute or estimate asset-wise marginals $\{p_a\}_{a=1}^{d}$
    \State Compute or estimate target basket CDF values $\{F_p(b_j)\}_{j=1}^{J}$
    
    \For{$a=1,\ldots,d$}
        \State Apply TT-SVD to $\sqrt{p_a}$ and build the marginal ansatz $U_a(\theta_a)$
        \State Train $U_a(\theta_a)$ by marginal fidelity loss to obtain $\theta_a^*$
    \EndFor
    
    \State Set $U_{\mathrm{marg}}=\bigotimes_{a=1}^{d} U_a(\theta_a^*)$ and freeze the marginal loaders
    \State Append an $n_L$-qubit latent register and initialize $U_{\mathrm{dep}}(\phi)$
    
    \State Define $L_{\mathrm{BP}}(\phi)$ by preparing $|\Psi_\phi\rangle$,
    tracing out the latent register to obtain $\tilde p_\phi$, computing
    $\{F_\phi(b_j)\}_{j=1}^{J}$, and evaluating
    \[
    L_{\mathrm{BP}}(\phi)
    =
    \lambda_B
    \frac{1}{J}\sum_{j=1}^{J}
    \left(F_{\phi}(b_j)-F_p(b_j)\right)^2
    +
    \lambda_M L_{\mathrm{marg}}(\phi),
    \]
    where $L_{\mathrm{marg}}$ penalizes distortion of the asset-wise marginals.
    
    \State Optimize only $\phi$ for at most $T_{\mathrm{opt}}$ iterations to obtain $\phi^*$
    
    \State \Return
    $U_{\mathrm{CDF}}
    =
    U_{\mathrm{dep}}(\phi^*)
    \left(U_{\mathrm{marg}}\otimes I_L\right)$
    \end{algorithmic}
    \end{algorithm}
    
    \FloatBarrier

    \subsection{Sample-estimable formulation}
    \label{subsec:sample_estimable}
    
    The basket-CDF objective can be estimated from samples. Given target-model samples
    \[
        \bm{x}^{(1)},\ldots,\bm{x}^{(N)}\sim p
    \]
    and measurement samples
    \[
        \bm{y}^{(1)},\ldots,\bm{y}^{(M)}\sim \widetilde p_{\phi}
    \]
    from the variational loader, define the empirical CDFs
    \begin{equation}
        \widehat F_p(b)
        =
        \frac{1}{N}
        \sum_{n=1}^{N}
        \mathbf{1}\{B(\bm{x}^{(n)})\le b\},
        \label{eq:empirical_target_cdf}
    \end{equation}
    and
    \begin{equation}
        \widehat F_{\phi}(b)
        =
        \frac{1}{M}
        \sum_{m=1}^{M}
        \mathbf{1}\{B(\bm{y}^{(m)})\le b\}.
        \label{eq:empirical_model_cdf}
    \end{equation}
    The sample-based CDF loss is
    \begin{equation}
        \widehat{\mathcal{L}}_{\mathrm{CDF}}(\phi)
        =
        \frac{1}{J}
        \sum_{j=1}^{J}
        \left(
            \widehat F_{\phi}(b_j)-\widehat F_p(b_j)
        \right)^2 .
        \label{eq:sample_cdf_loss}
    \end{equation}
    
    The sample-based formulation is supported by standard empirical-CDF
    concentration. By the Dvoretzky--Kiefer--Wolfowitz inequality with Massart's
    sharp constant \citep{dvoretzky1956asymptotic,massart1990tight}, for target
    basket samples,
    \begin{equation}
        \Pr\left(
        \sup_b |\widehat F_p(b)-F_p(b)|>\epsilon
        \right)
        \le
        2e^{-2N\epsilon^2}.
        \label{eq:dkw_target}
    \end{equation}
    Equivalently, with probability at least $1-\delta$,
    \begin{equation}
        \sup_b |\widehat F_p(b)-F_p(b)|
        \le
        \sqrt{\frac{\log(2/\delta)}{2N}}.
        \label{eq:dkw_target_confidence}
    \end{equation}
    
    The same bound applies to the model empirical CDF $\widehat F_\phi$ with sample
    size $M$. Similar empirical-CDF concentration arguments have been used for
    finite-sample analysis of distributional risk measures such as VaR and CVaR
    \citep{kolla2019concentration}. Thus, for a fixed target distribution and a fixed circuit parameter $\phi$, the
    empirical CDFs converge uniformly in the basket value $b$ to their population
    counterparts. Consequently, the sample-estimated Basket-CDF loss provides a
    consistent estimate of the population CDF loss without reconstructing the full
    asset-register probability vector.
    
    The required target information consists of asset-wise marginals for the
    marginal loaders and samples, or an empirical CDF estimate, of the basket
    projection. Thus the
    target representation can scale as
    \[
        O(d2^q+J)
    \]
    for stored marginals and CDF grid values, or as
    \[
        O(d2^q+N)
    \]
    when the basket CDF is estimated from target samples, instead of requiring the full
    $O(2^{dq})$ joint probability table. In exact-grid numerical experiments, the reference CDF may still be computed from
    the full discretized joint law. The reduced scaling above refers to the training
    interface once marginal distributions and basket samples or CDF values are
    available.

    \section{State-Preparation Depth Analysis}
    \label{sec:complexity}
    
    This section analyzes the state-preparation circuit-depth scaling of the proposed Basket-CDF loader. We compare it with two reference loading methods: exact amplitude loading and Direct TT/MPS loading. We focus on state preparation, since this is the component that differs across the loading methods considered in this paper. Using the notation introduced in Section~\ref{subsec:multi_asset_loading}, the asset register contains $n=dq$ qubits.
    
    The expressions below describe theoretical depth scaling rather than exact hardware-specific compiled depths. Actual transpiled depths and CX counts are reported in the numerical experiments. A generic exact amplitude
    loader for a dense $n$-qubit state requires exponentially many elementary
    operations, so we write
    \[
        D_{\mathrm{exact}}
        =
        O(2^n)
        =
        O(2^{dq}).
    \]
    This is the state-preparation bottleneck that structure-aware loading methods
    aim to avoid.
    
    Direct TT/MPS loading provides a compressed alternative when the amplitude
tensor admits a low-rank tensor-network approximation. A bond-dimension-$\chi$
MPS admits a sequential-preparation interpretation using $O(n)$ local synthesis
steps, where $n=dq$ is the number of asset-register qubits
\citep{iaconis2024quantum}. In our numerical baseline, following the
MPS-to-circuit encoding perspective of \citet{ran2020encoding}, the target
amplitude tensor is first approximated by a rank-capped TT/MPS representation
and then converted into a circuit of one- and two-qubit gates. Since the compiled
depth depends on the retained bond dimension, local synthesis routine, and
compiler choices, we summarize the Direct TT state-preparation depth scaling as
\[
    D_{\mathrm{DirectTT}}
    =
    O\!\left(dq\,D_{\mathrm{loc}}(\chi)\right),
\]
where $D_{\mathrm{loc}}(\chi)$ denotes the implementation-dependent local
synthesis depth associated with a bond-dimension-$\chi$ MPS operation. In the
numerical experiments, we therefore report the actual transpiled depth and CX
count of the constructed Direct TT/MPS circuits.
    
    \subsection{Basket-CDF depth estimate}
    \label{subsec:basket_cdf_depth}
    
    The Basket-CDF loader consists of two stages. First, the asset-wise marginal
    distributions are prepared by local TT-informed marginal circuits. Second, a
    latent dependence block modifies the product-marginal state to match the basket
    pushforward distribution.
    
    For asset $a$, let $E_{\mathrm{TT}}^{(a)}$ be the TT-selected entangling edge set
    of the $q$-qubit marginal circuit, and let $L_{\mathrm{marg}}$ be the number of
    marginal-circuit layers. By the TT-informed ansatz construction in
    Section~\ref{subsec:tt_informed_hea}, the marginal loader for asset $a$ contains
    \begin{equation}
        N^{\mathrm{marg},a}_{2q}
        =
        L_{\mathrm{marg}}|E_{\mathrm{TT}}^{(a)}|
        \le
        L_{\mathrm{marg}}(q-1)
        \label{eq:marg_2q_count}
    \end{equation}
    two-qubit gates and
    \begin{equation}
        N^{\mathrm{marg},a}_{1q}
        =
        L_{\mathrm{marg}}q
        \label{eq:marg_1q_count}
    \end{equation}
    single-qubit rotations. If $\delta(E_{\mathrm{TT}}^{(a)})$ denotes the schedule
    depth of one marginal entangling block, then
    \begin{equation}
        D_a^{\mathrm{marg}}
        =
        O\!\left(
        L_{\mathrm{marg}}
        \left(1+\delta(E_{\mathrm{TT}}^{(a)})\right)
        \right).
        \label{eq:single_marg_depth}
    \end{equation}
    Under a sequential edge schedule,
    $\delta(E_{\mathrm{TT}}^{(a)})\le |E_{\mathrm{TT}}^{(a)}|\le q-1$, and hence
    \begin{equation}
        D_a^{\mathrm{marg}}
        =
        O(L_{\mathrm{marg}}q).
        \label{eq:single_marg_depth_bound}
    \end{equation}
    
    Since the marginal loaders act on disjoint asset registers, they can be
    scheduled in parallel at the logical-circuit level. Therefore the depth of the
    product-marginal initialization is
    \begin{equation}
        D_{\mathrm{marg}}
        =
        \max_{a=1,\ldots,d}
        D_a^{\mathrm{marg}}.
        \label{eq:dmarg_parallel}
    \end{equation}
    Using \eqref{eq:single_marg_depth_bound}, this gives
    \[
        D_{\mathrm{marg}}
        =
        O(L_{\mathrm{marg}}q).
    \]
    A fully serial implementation would replace the maximum in
    \eqref{eq:dmarg_parallel} by a sum, but the key point is that the marginal stage
    acts only on local $q$-qubit asset registers and does not require direct
    preparation of the full $dq$-qubit joint distribution.
    
    The dependence block used in the numerical implementation has one latent qubit
    unless stated otherwise. Let $\ell_0$ denote the latent qubit and let $q_{a,b}$
    denote the $b$th qubit of asset $a$. In our implementation, one dependence layer
    has the form
    \begin{equation}
        U_{\mathrm{dep}}^{(\ell)}
        =
        U_{\mathrm{rot},A}^{(\ell)}
        U_{\mathrm{chain}}^{(\ell)}
        U_{\mathrm{CRY}}^{(\ell)}
        R_y^{(\ell_0)} .
        \label{eq:dep_layer_form}
    \end{equation}
    Here
    \begin{equation}
        U_{\mathrm{CRY}}^{(\ell)}
        =
        \prod_{a=1}^{d}
        \prod_{b=0}^{q-1}
        \mathrm{CRY}_{\ell_0\rightarrow q_{a,b}} .
        \label{eq:cry_sweep}
    \end{equation}
    and
    \begin{equation}
        U_{\mathrm{chain}}^{(\ell)}
        =
        \prod_{a=1}^{d}
        \prod_{b=0}^{q-2}
        \mathrm{CNOT}_{q_{a,b}\rightarrow q_{a,b+1}},
        \qquad
        U_{\mathrm{rot},A}^{(\ell)}
        =
        \prod_{a=1}^{d}
        \prod_{b=0}^{q-1}
        R_y^{(q_{a,b})}.
        \label{eq:local_chain_rot}
    \end{equation}
    
    Each dependence layer contains latent-controlled rotations from the shared
    latent qubit to all asset qubits. Therefore, the number of latent-controlled
    rotations is
    \begin{equation}
        N^{\mathrm{dep}}_{\mathrm{CRY}}
        =
        L_{\mathrm{dep}}dq.
        \label{eq:dep_cry_count}
    \end{equation}
    The intra-asset CNOT chains contribute
    \begin{equation}
        N^{\mathrm{dep}}_{\mathrm{chain}}
        =
        L_{\mathrm{dep}}d(q-1)
        \label{eq:dep_chain_count}
    \end{equation}
    CNOT gates, and the one-qubit rotations contribute
    \begin{equation}
        N^{\mathrm{dep}}_{1q}
        =
        L_{\mathrm{dep}}(dq+1).
        \label{eq:dep_1q_count}
    \end{equation}
    
    Since the latent-controlled rotations share the same latent control qubit, they
    cannot be scheduled fully in parallel in the single-latent-qubit implementation.
    Their depth is therefore linear in $dq$ per dependence layer. The intra-asset
    CNOT chains add at most linear depth in $q$ per layer and can be executed across
    asset registers in parallel under ideal connectivity. Thus,
    \begin{equation}
        D_{\mathrm{dep}}
        =
        O(L_{\mathrm{dep}}dq).
        \label{eq:dep_depth}
    \end{equation}
    
    Combining the marginal initialization and the latent dependence block gives the
    online state-preparation depth
    \begin{equation}
        D_{\mathrm{CDF}}
        =
        D_{\mathrm{marg}}
        +
        O(L_{\mathrm{dep}}dq),
        \label{eq:cdf_depth}
    \end{equation}
    up to hardware-specific routing overhead and constant factors from native-gate
    decomposition. Since $D_{\mathrm{marg}}=O(L_{\mathrm{marg}}q)$ at the
    logical-circuit level, the dominant dependence on the number of assets comes
    from the latent dependence block. For fixed $L_{\mathrm{marg}}$,
    $L_{\mathrm{dep}}$, and $q$, the online state-preparation depth therefore grows
    linearly with $d$, rather than exponentially in the full joint grid size
    $2^{dq}$.
    
    \subsection{Summary of depth regimes}
    \label{subsec:depth_summary}
    
    The loading-depth regimes discussed above can be summarized in Table \ref{tab:depth_scaling}.
    
\begin{table}[htbp]
\centering
\caption{Summary of state-preparation depth scaling}
\label{tab:depth_scaling}
\begin{tabular}{lll}
\toprule
Construction & Relevant parameters & Depth scaling \\
\midrule
Exact loading
& $n=dq$
& $O(2^{dq})$ \\

Direct TT/MPS loading
& $\chi=\max_k r_k$
& $O(dq\,D_{\mathrm{loc}}(\chi))$ \\

TT-informed marginal primitive
& $L_{\mathrm{marg}},q,|E_{\mathrm{TT}}^{(a)}|$
& $O\!\left(L_{\mathrm{marg}}
\left(1+\delta(E_{\mathrm{TT}}^{(a)})\right)\right)$ \\

Basket-CDF loader
& $L_{\mathrm{marg}},L_{\mathrm{dep}},d,q$
& $D_{\mathrm{marg}}+O(L_{\mathrm{dep}}dq)$ \\
\bottomrule
\end{tabular}
\end{table}
    
    Here $D_{\mathrm{loc}}(\chi)$ denotes the local synthesis depth for a
    bond-dimension-$\chi$ MPS operation, and
    $\delta(E_{\mathrm{TT}}^{(a)})$ denotes the schedule depth of one TT-informed
    entangling block for the marginal circuit of asset $a$. These expressions describe theoretical loading-depth scaling rather than hardware-specific runtime estimates.
    
    The proposed framework does not claim a universal depth reduction for all
    multi-asset distributions. Direct TT/MPS loading can be highly accurate when a
    sufficiently large bond dimension is retained, but the corresponding local
    synthesis cost increases with $\chi$. The Basket-CDF loader takes a different
    route. It prepares asset-wise marginals locally and uses a linear-depth latent
    dependence block to match the basket pushforward distribution. This avoids
    full-joint state preparation in correlated basket-structured settings while
    keeping the online loading depth linear in $dq$ for fixed $L_{\mathrm{dep}}$ and
    fixed marginal-circuit depth.
    
    The numerical resource counts in Appendix~\ref{app:numerical} are consistent
    with this estimate. After transpilation, the Basket-CDF depth grows with an
    approximately constant per-asset increment. In the $q=3$ experiments, for
    instance, adding one asset increases the Basket-CDF depth by about $36$ circuit
    layers, while exact amplitude loading grows by orders of magnitude over the same
    range. Thus, the linear-depth dependence in \eqref{eq:cdf_depth} remains visible
    in the transpiled circuits, with a smaller constant factor due to the shallow
    latent-dependence block.

     \section{Numerical Results}
    \label{sec:numerical_results}
    
    We evaluate the proposed loading methods on discretized multi-asset lognormal
    basket instances. The numerical experiments are designed to isolate the effect of
    state preparation on basket-structured payoff accuracy. Accordingly, all option
    values reported in this section are computed directly from the probability
    distribution induced by each state-preparation circuit. Quantum amplitude estimation
    and payoff-oracle sampling are not included in the reported errors.
    
    The main comparison is between a Direct TT/MPS loader with a fixed rank cap and the
    proposed marginal-TT latent Basket-CDF loader. The Direct TT baseline first constructs
    a rank-capped tensor-train representation of the target amplitude tensor and then
    converts it into a quantum circuit. The proposed loader first trains asset-wise
    TT-informed marginal circuits, appends one latent qubit, and trains a latent
    dependence block using a Basket-CDF loss with marginal regularization. The proposed
    training objective does not use option-price labels or full-state fidelity.
    
    For each market instance, we compare the induced basket payoff values against the
    exact discretized reference distribution at a fixed set of strikes. We report payoff errors, full-distribution fidelity as a diagnostic, transpiled
    state-preparation depth, and CX count. The software environment is summarized in Appendix~\ref{app:software_environment},
    and the market-sector-style correlation construction is described in
    Appendix~\ref{app:market_sector_corr}. Market parameters, discretization choices,
    and optimization settings are specified below. The Direct TT baseline is evaluated with a fixed maximum bond dimension
    $\chi=4$. This cap was chosen to provide a compressed low-rank TT/MPS baseline
    with moderate circuit depth. Increasing $\chi$ may improve the Direct TT
    approximation, but it also increases the local MPS synthesis cost and therefore
    leads to deeper online loading circuits.

    \subsection{Latent-register Basket-CDF loader in correlated regimes}
    \label{subsec:latent_basket_cdf_results}
    
    As a preliminary validation of the TT-informed state-preparation primitive,
    Appendix~\ref{app:tt_independent} reports the independent case, where the joint
    distribution factorizes across asset blocks. In this regime, the TT-rank profile
    removes unnecessary cross-asset entanglers and yields a shallow full-register
    state-preparation circuit. The main numerical focus below is the correlated
    basket setting, where TT-informed circuits are used locally for marginal
    preparation and the Basket-CDF latent block learns basket-relevant dependence.
    
   We next evaluate the marginal-TT latent Basket-CDF loader in correlated basket
settings. Figures~\ref{fig:q2_directtt_ours_comparison}
and~\ref{fig:q3_directtt_ours_comparison} report the comparison for $q=2$ and
$q=3$ uncertainty qubits per asset, respectively. Both figures show the pricing error at $K=55$, near the central payoff-sensitive region for $S_0=50$,
$r=0.05$, and $T=1$. Full strike-wise results are reported in
Appendix~\ref{app:numerical}.
    
    For accuracy, the proposed Basket-CDF loader is competitive with or better than
    the fixed-rank Direct TT baseline across the correlated tests, with the
    advantage most visible in the market-sector-style correlation regime. In the
    equicorrelation case, Direct TT can perform well at low dimension, indicating
    that this correlation structure can still be reasonably compressible by a
    low-rank TT representation with the chosen bond-dimension cap $\chi=4$. However,
    as the dimension increases, the Basket-CDF loader remains more stable. In the
    market-sector-style case, the difference is clearer: Direct TT becomes more
    sensitive to dimension and correlation heterogeneity, whereas the Basket-CDF
    loader maintains lower pricing errors. This suggests that Direct TT is effective when the full
    joint distribution is well aligned with a low-rank TT structure, while the
    proposed method is more robust when the payoff-relevant basket distribution is
    easier to learn than the full joint law.
    
    The depth comparison shows the main resource advantage of the proposed method.
    Exact amplitude loading grows exponentially with the total number of uncertainty
    qubits, and therefore its transpiled depth increases rapidly with $d$ and $q$.
    By contrast, both Direct TT with fixed $\chi$ and the proposed Basket-CDF loader
    grow approximately linearly in the tested regimes. Among the two compressed loaders, the new Basket-CDF circuit is consistently
    shallower in the higher-resolution tests and has a smaller depth growth with
    dimension. For $q=3$, its depth increases from $92$ to $200$ as
    $d$ increases from $2$ to $5$, whereas the fixed-rank Direct TT depth increases
    from $133$ to $482$ over the same range.
    
    The higher-resolution $q=4$ results in Table~\ref{tab:q4-full-results} further
    support this interpretation. Although the $q=4$ experiments are limited to
    $d\le 4$ because of the offline cost of exact classical training and validation,
    they show that the proposed Basket-CDF loader becomes more stable as the
    uncertainty-qubit resolution is increased. For example, at $d=4$ and $K=55$, the Basket-CDF error decreases from
    $1.363\%$ for $q=2$ to $0.343\%$ for $q=3$ and $0.173\%$ for $q=4$ in the
    equicorrelation case. In the market-sector-style case, the corresponding errors
    decrease from $1.681\%$ to $0.211\%$ and then to $0.156\%$. This behavior is consistent with
    Remark~\ref{rem:resolution_error_decomposition}: increasing $q$ refines the
    marginal asset grids and the induced basket distribution, while the Basket-CDF
    objective continues to control the loader-induced error through the basket CDF.
    
    Overall, these results show that the proposed loader trades offline variational
    training for a shallow online state-preparation circuit. The resulting circuit
    avoids the exponential depth growth of exact loading and is less dependent than
    Direct TT on whether the full joint amplitude tensor is compressible at a fixed
    bond dimension. This makes the Basket-CDF loader particularly attractive for
    correlated basket-pricing problems, where the target quantity depends on the
    basket pushforward distribution rather than on full-state fidelity.
    
    We also performed two additional robustness analyses, reported in Appendix~\ref{app:numerical}. First, Table~\ref{tab:sampletarget_basketcdf_q3_mean} replaces the exact-grid Basket-CDF target with a sample-estimated basket CDF. The resulting errors remain at the percent level across the tested dimensions and correlation structures. This supports the sample-estimable formulation in Section~\ref{subsec:sample_estimable}, showing that the proposed loader can be trained from samples of the basket projection rather than from the full joint probability tensor.
    
    Second, Table~\ref{tab:directtt_rank_sweep_q3} reports a bond-dimension sweep for the Direct TT/MPS baseline. As expected, increasing the TT bond-dimension cap improves the approximation accuracy and drives the fidelity closer to one. However, this improvement comes with a substantial increase in transpiled state-preparation depth and CX count. This confirms that Direct TT/MPS loading can be highly accurate when a sufficiently large bond dimension is retained, but the additional accuracy is obtained at the cost of deeper online loading circuits. Therefore, the fixed-rank Direct TT baseline should be interpreted as a
    compressed-loading baseline at a moderate bond dimension rather than as the best
    possible Direct TT approximation.
    
    Finally, Figure~\ref{fig:qae_sanity_check} evaluates the compatibility of the
    sampling-based Basket-CDF loader with a standard QAE-based pricing workflow
    for one representative correlated basket instance. 
    The QAE estimates
    closely follow the option-price curve induced by the sampling-based Basket-CDF loader
    and remain close to the exact discretized reference across the tested strikes.
    
    As an additional empirical-scenario robustness, Appendix~\ref{app:empirical_scenario}
    applies the same Basket-CDF construction to historical equity return scenarios.
    This experiment is not intended as a risk-neutral pricing calibration; rather,
    it tests whether the loader can match an empirical basket distribution when the
    target is specified by samples instead of a closed-form probability model.
    
    \begin{figure}[t]
        \centering
        \begin{subfigure}[t]{0.48\textwidth}
            \centering
            \includegraphics[width=\textwidth]{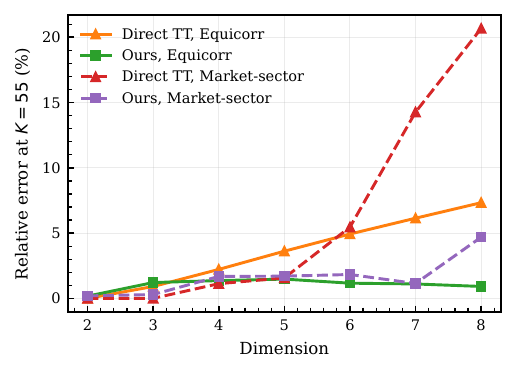}
            \caption{Relative pricing error at $K=55$.}
            \label{fig:q2_comparison_error_k55}
        \end{subfigure}
        \hfill
        \begin{subfigure}[t]{0.48\textwidth}
            \centering
            \includegraphics[width=\textwidth]{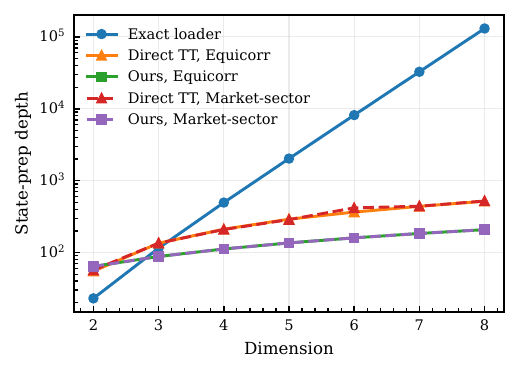}
            \caption{State-preparation depth.}
            \label{fig:q2_comparison_depth}
        \end{subfigure}
        \caption{Correlated-basket comparison for $q=2$ uncertainty qubits per asset. (a) Relative pricing error at strike $K=55$. (b) Transpiled state-preparation depth on a logarithmic scale. Exact loading is shown only in panel (b), since it prepares the discretized target state exactly. See Table \ref{tab:directtt_basketcdf_q2_mean} for the full results.}
        \label{fig:q2_directtt_ours_comparison}
    \end{figure}

    \begin{figure}[t]
        \centering
        \begin{subfigure}[t]{0.48\textwidth}
            \centering
            \includegraphics[width=\textwidth]{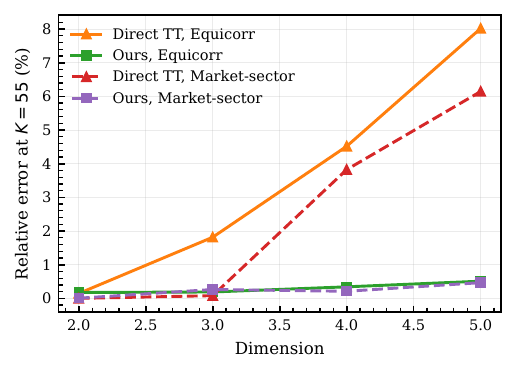}
            \caption{Relative pricing error at $K=55$.}
            
        \end{subfigure}
        \hfill
        \begin{subfigure}[t]{0.48\textwidth}
            \centering
            \includegraphics[width=\textwidth]{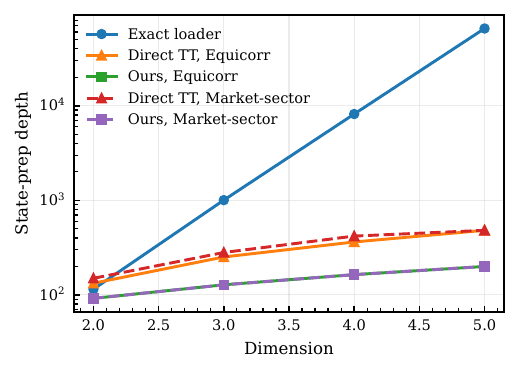}
            \caption{State-preparation depth.}
            
        \end{subfigure}
        \caption{Correlated-basket comparison for $q = 3$ uncertainty qubits per asset.
(a) Relative pricing error at K = 55. (b) Transpiled state-preparation depth on a
logarithmic scale. Exact loading is shown only in panel (b), since it prepares the
discretized target state exactly. See Table \ref{tab:directtt_basketcdf_q3_mean} for the full results.}
        \label{fig:q3_directtt_ours_comparison}
    \end{figure}

\section{Discussion and Conclusion}
\label{sec:discuss}

This paper addresses the state-preparation cost in QAE-based basket option
pricing. Although QAE can reduce the sampling complexity of Monte Carlo
estimation, its practical benefit critically depends on repeatedly preparing the
discretized asset-price distribution as a quantum input state. We proposed a
structure-aware state-preparation framework that combines TT-informed
asset-wise marginal loaders with a latent dependence block trained
through a Basket-CDF objective. The resulting loader is designed to preserve
the basket-value distribution rather than to reconstruct the full joint
asset distribution.

The numerical results show that this task-aligned construction can retain
basket-level pricing accuracy while substantially reducing online
state-preparation depth. Exact amplitude loading has exponential depth
scaling in the full asset register, whereas the proposed Basket-CDF loader has
linear depth scaling in $dq$ for fixed marginal and dependence-block depths.
Across the correlated basket experiments, the proposed loader maintains
low-percent pricing errors and remains competitive with, or more stable than,
the fixed-rank Direct TT/MPS baseline. The Direct TT/MPS bond-dimension sweep in
Table~\ref{tab:directtt_rank_sweep_q3} further illustrates the
accuracy--depth trade-off: increasing the TT bond-dimension cap improves
full-state fidelity, but this improvement comes at the cost of deeper online
state-preparation circuits and larger CX counts. Thus, the proposed method should not be interpreted as universally more
accurate than high-rank tensor-network loading. Its advantage is that it
preserves basket-pricing accuracy using a shallow, task-aligned circuit.

A key reason this approximation can work well is the structure of basket
payoffs. Basket option values depend on the multi-asset distribution only
through the distribution of the weighted basket value. Therefore, accurate basket pricing does not require perfect reconstruction of
the full joint distribution, provided that the basket pushforward distribution
is well preserved. This is consistent with the Basket-CDF analysis in
Section~\ref{subsec:basket_cdf_objective}, where basket-call errors are
controlled by the CDF discrepancy of the basket value rather than by
full-state fidelity.

Figure~\ref{fig:qae_sanity_check} further supports this interpretation through
a QAE integration experiment using a Basket-CDF loader obtained from
sampling-based training. When the learned loader is inserted into a standard
QAE-based pricing workflow, the QAE estimates track the option-price curve
induced by the loader and remain close to the exact discretized reference
across the tested strikes.

The current implementation uses exact statevector simulation for controlled training and validation, which becomes expensive as
$dq$ increases. This exact-simulation workflow is useful for controlled
benchmarking, while the Basket-CDF objective itself is sample-estimable. The
target basket CDF can be estimated from samples of the basket projection, and
the model CDF can be estimated from circuit measurement samples without
reconstructing the full asset-register probability vector. The
sampling-based Basket-CDF training results in
Table~\ref{tab:sampletarget_basketcdf_q3_mean} show that this
sample-estimated training interface can still achieve percent-level pricing
errors in the tested cases. 

Future work will extend the sampling-based training interface developed in
this paper to larger asset registers and more realistic execution settings.
Natural directions include more efficient sampling-based training,
noisy-simulator and real-device experiments, adaptive latent-register designs,
and extensions to other portfolio-level quantities such as tail probabilities,
value-at-risk (VaR), and conditional value-at-risk (CVaR).

\backmatter

\section*{Declarations}

\subsection*{Availability of data and materials}
The datasets generated and/or analyzed during the current study are available from the corresponding author on reasonable request.

\subsection*{Competing interests}
The authors declare that they have no competing interests.

\subsection*{Funding}
This work was supported by the National Research Foundation of Korea
(RS-2025-02309510, RS-2025-16066836).

\subsection*{Authors' contributions}
DK developed the methodology, implemented the numerical experiments, analyzed the results, and drafted the manuscript. ZC contributed to manuscript revision and helped assess the financial relevance and interpretation of the numerical results. DKP contributed to methodology development and manuscript revision. CL supervised the project, contributed to conceptualization and methodology, and revised the manuscript. All authors contributed to the discussion of the results and read and approved the final manuscript.

\subsection*{Acknowledgements}
Not applicable.






\begin{appendices}

    \section{Software Environment}
    \label{app:software_environment}

All numerical experiments were implemented in Python.
The implementation does not rely on local project-specific modules. The
experiments were run in a Python environment compatible with the
\texttt{mps-to-circuit} package; in particular, the notebook uses a Python
version in the range
\[
    3.10 \le \texttt{Python} < 3.13,
\]
with Python 3.12 used as the reference environment. The reference environment used in our experiments was
\[
\begin{aligned}
&\texttt{Python}=3.12.13,\quad
\texttt{numpy}=2.4.4,\quad
\texttt{pandas}=3.0.2,\\
&\texttt{scipy}=1.17.1,\quad
\texttt{matplotlib}=3.10.9,\quad
\texttt{qiskit}=2.4.1,\\
&\texttt{qiskit-aer}=0.17.2,\quad
\texttt{mps-to-circuit}=0.1.2.
\end{aligned}
\]

The package \texttt{numpy} is used for tensor and probability-vector operations,
\texttt{pandas} for result aggregation, and \texttt{scipy} for optimization and
multivariate normal density evaluation. Circuit construction and transpilation
are performed with \texttt{qiskit}, while exact statevector probabilities are
obtained using \texttt{qiskit.quantum\_info.Statevector}. The Direct TT/MPS
baseline is constructed using the \texttt{mps-to-circuit} package.

All variational optimizations use \texttt{scipy.optimize.minimize} with the
L-BFGS-B method. Multi-seed experiments use explicitly specified pseudorandom
seeds, and reported multi-seed quantities are computed by aggregating the
corresponding per-seed results.

All reported circuit resources are computed after transpilation to the basis
\[
    \{\texttt{u},\texttt{cx}\}
\]
with Qiskit's transpiler. Unless otherwise stated, no backend-specific device
topology, routing constraint, or hardware noise model is included in the reported
resource counts. The reported pricing and fidelity diagnostics are computed from
classical statevector simulations and therefore represent state-preparation
accuracy before finite-shot quantum estimation error.

    \section{Market-sector-style Correlation Structure}
    \label{app:market_sector_corr}
    
    This appendix describes the market-sector-style correlation matrix used in the
    correlated numerical experiments. The purpose of this case is to provide a
    structured but non-equicorrelated dependence pattern. The construction is a synthetic multi-factor correlation model inspired by equity
    factor-risk models, where asset returns are decomposed into common systematic factors
    and idiosyncratic residual risk
    \citep{rosenberg1973prediction,fama1993common,carhart1997persistence}.\footnote{
    Commercial equity risk models, such as MSCI Barra models, also use market, industry,
    style, and other common risk factors together with asset-specific residual risk. See,
    for example, MSCI's Barra Global Equity Model documentation:
    \url{https://www.msci.com/documents/10199/242721/Barra_Global_Equity_Model_GEM3.pdf}.
    We use the term ``market-sector-style'' only to indicate this modeling motivation; the
    benchmark in this paper is not a calibrated implementation of any commercial Barra
    model.
    }
    The model should therefore be interpreted as a reproducible synthetic factor
    correlation structure rather than as a historically calibrated market model. It is
    designed to include market-wide dependence, sector clustering, style-like
    cross-sectional variation, and idiosyncratic risk.
    
    For a problem with \(d\) assets, index the assets by
    \[
        a=0,\ldots,d-1.
    \]
    Define a normalized cross-sectional coordinate
    \[
        z_a
        =
        \mathrm{linspace}(-1,1,d)_a .
    \]
    Each asset is assigned a five-dimensional loading vector
    \[
        \ell_a
        =
        \left(
            \ell_a^{\mathrm{mkt}},
            \ell_a^{\mathrm{secA}},
            \ell_a^{\mathrm{secB}},
            \ell_a^{\mathrm{val}},
            \ell_a^{\mathrm{mom}}
        \right),
    \]
    where
    \[
        \ell_a^{\mathrm{mkt}} = 0.42,
    \]
    \[
        \ell_a^{\mathrm{secA}}
        =
        \begin{cases}
            0.48, & a < \lceil d/2\rceil,\\
            0.03, & a \ge \lceil d/2\rceil,
        \end{cases}
    \]
    \[
        \ell_a^{\mathrm{secB}}
        =
        \begin{cases}
            0.03, & a < \lceil d/2\rceil,\\
            0.47, & a \ge \lceil d/2\rceil,
        \end{cases}
    \]
    and
    \[
        \ell_a^{\mathrm{val}} = 0.38 z_a,
        \qquad
        \ell_a^{\mathrm{mom}} = 0.26(-1)^a.
    \]
    The market component creates a common exposure shared by all assets. The two sector
    components create stronger within-group dependence in the first and second halves of
    the asset universe. The value-like component introduces a smooth cross-sectional
    gradient, while the momentum-like component introduces an alternating style exposure
    across asset indices.
    
    Let \(L\in\mathbb{R}^{d\times 5}\) be the factor loading matrix whose \(a\)th row is
    \(\ell_a\). We form the covariance proxy
    \[
        C
        =
        LL^\top + \eta I_d,
        \qquad
        \eta=0.22,
    \]
    where \(\eta I_d\) represents idiosyncratic variance. The correlation matrix is then
    obtained by normalizing this covariance proxy:
    \[
        \rho_{ij}
        =
        \frac{C_{ij}}{\sqrt{C_{ii}C_{jj}}},
        \qquad
        i,j=0,\ldots,d-1,
    \]
    with \(\rho_{ii}=1\).
    
    The matrix \(C\) is positive definite by construction because \(LL^\top\) is positive
    semidefinite and \(\eta I_d\) is positive definite for \(\eta>0\). The resulting
    correlation matrix is therefore valid and contains heterogeneous dependence beyond a
    single equicorrelation coefficient. This market-sector-style case is used to test the
    loading methods on a structured multi-factor correlation pattern rather than on a
    purely homogeneous correlation matrix.
    
    \section{TT-informed State Preparation in the Independent Case}
    \label{app:tt_independent}

    This appendix reports a preliminary validation of the TT-informed
    state-preparation primitive in the independent regime. When the assets are
    independent, the joint distribution factorizes across asset blocks, and the
    corresponding amplitude tensor has TT rank one across inter-asset cuts. The
    TT-informed entangling rule therefore removes unnecessary cross-asset entanglers
    and retains only the local links needed for marginal state preparation.
    
    Figure~\ref{fig:independent_regime} reports the results for $\rho=0$ and
    $q=2$ uncertainty qubits per asset. The exact loader has zero pricing error
    because it prepares the discretized target state directly. The TT-informed
    loader introduces only a small approximation error, remaining at the percent
    level across the tested dimensions. At the same time, it achieves a substantial
    depth reduction relative to exact amplitude loading. While exact loading grows
    rapidly with the total number of uncertainty qubits, the TT-informed circuit has
    nearly constant transpiled depth in this independent experiment, since adding
    assets does not require additional cross-asset entanglement.
    
    These results support the use of TT-rank information as a structural prior for
    removing redundant entangling links in factorized settings. In the main
    correlated-basket experiments, however, we do not use this full-register
    TT-informed circuit as the primary loader. Instead, the same TT-informed
    primitive is used locally for asset-wise marginal preparation, while
    cross-asset dependence is learned by the latent Basket-CDF block.
    
    \begin{figure}[t]
        \centering
        \begin{subfigure}[t]{0.48\textwidth}
            \centering
            \includegraphics[width=\textwidth]{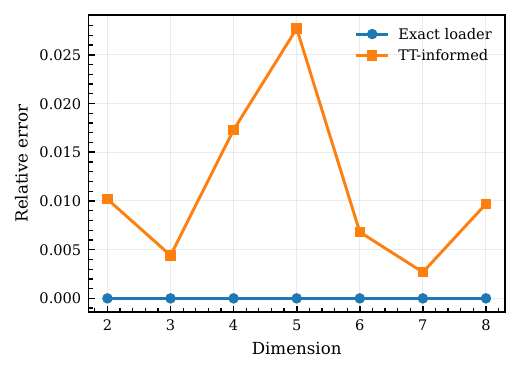}
            \caption{Relative pricing error vs.\ dimension.}
        \end{subfigure}
        \hfill
        \begin{subfigure}[t]{0.48\textwidth}
            \centering
            \includegraphics[width=\textwidth]{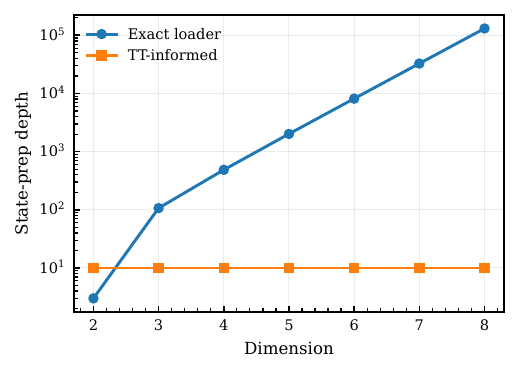}
            \caption{State-preparation depth vs.\ dimension.}
        \end{subfigure}
        \caption{
        Results for the independent regime $(\rho=0)$ with $q=2$ uncertainty qubits per
        asset. (a) Relative pricing error versus basket dimension. (b) Transpiled
        state-preparation depth versus basket dimension on a logarithmic scale.
        } \label{fig:independent_regime}
    \end{figure}
    
    \FloatBarrier 
\section{Additional Numerical Results}
\label{app:numerical}

\begin{table}[t]
\centering
\scriptsize
\setlength{\tabcolsep}{2pt}
\caption{Comparison between Exact loading, Direct TT, and the proposed Basket-CDF loader for $q=2$ uncertainty qubits per asset and dimensions $d=2,\ldots,8$. Market parameters are $S_0=50$, $\sigma=0.3$, $r=0.05$, and $T=1.0$. Results are shown for the equicorrelation case $\rho=0.6$ and the market-sector-style correlation case. Basket-CDF entries are multi-seed means over seeds $0,\ldots,9$ obtained with the loss weight $\lambda_B = 1000, \lambda_M = 10$, one latent qubit, $L_{\mathrm{dep}}=4$, and L-BFGS-B with 180 maximum iterations. Errors are relative pricing errors in percent.}
\label{tab:directtt_basketcdf_q2_mean}

\begin{tabular*}{\linewidth}{
    @{\extracolsep{\fill}}
    lllccccccc
    @{}
}
\toprule
Case & $d$ & Method
& $K=40$ & $K=45$ & $K=50$ & $K=55$
& Fidelity & Depth & CX \\
\midrule

\multirow{21}{*}{Equicorr ($\rho=0.6$)}
& \multirow{3}{*}{2}
& Exact loader
& -- & -- & -- & -- & -- & 23 & 11 \\

&
& Direct TT
& 0.000 & 0.000 & 0.000 & 0.000 & 1.0000 & 55 & 30 \\

&
& Basket-CDF
& 0.078 & 0.097 & 0.128 & 0.172 & 0.9900 & 64 & 50 \\

& \multirow{3}{*}{3}
& Exact loader
& -- & -- & -- & -- & -- & 115 & 57 \\

&
& Direct TT
& 0.512 & 0.594 & 0.716 & 0.910 & 0.9993 & 135 & 72 \\

&
& Basket-CDF
& 0.499 & 0.669 & 0.893 & 1.213 & 0.9378 & 88 & 75 \\

& \multirow{3}{*}{4}
& Exact loader
& -- & -- & -- & -- & -- & 495 & 247 \\

&
& Direct TT
& 1.358 & 1.575 & 1.854 & 2.220 & 0.9988 & 211 & 112 \\

&
& Basket-CDF
& 0.593 & 0.790 & 0.988 & 1.363 & 0.8880 & 112 & 100 \\

& \multirow{3}{*}{5}
& Exact loader
& -- & -- & -- & -- & -- & 2027 & 1013 \\

&
& Direct TT
& 2.463 & 2.828 & 3.190 & 3.615 & 0.9983 & 291 & 154 \\

&
& Basket-CDF
& 0.558 & 0.802 & 1.000 & 1.476 & 0.8548 & 136 & 125 \\

& \multirow{3}{*}{6}
& Exact loader
& -- & -- & -- & -- & -- & 8167 & 4083 \\

&
& Direct TT
& 3.738 & 4.195 & 4.487 & 4.929 & 0.9982 & 365 & 193 \\

&
& Basket-CDF
& 0.513 & 0.660 & 0.784 & 1.164 & 0.8991 & 160 & 150 \\

& \multirow{3}{*}{7}
& Exact loader
& -- & -- & -- & -- & -- & 32739 & 16369 \\

&
& Direct TT
& 5.075 & 5.516 & 5.710 & 6.142 & 0.9984 & 439 & 232 \\

&
& Basket-CDF
& 0.906 & 1.005 & 0.926 & 1.104 & 0.9351 & 184 & 175 \\

& \multirow{3}{*}{8}
& Exact loader
& -- & -- & -- & -- & -- & 131039 & 65519 \\

&
& Direct TT
& 6.486 & 6.845 & 6.976 & 7.328 & 0.9988 & 517 & 273 \\

&
& Basket-CDF
& 0.833 & 0.983 & 0.646 & 0.914 & 0.9614 & 208 & 200 \\

\midrule

\multirow{21}{*}{Market-sector}
& \multirow{3}{*}{2}
& Exact loader
& -- & -- & -- & -- & -- & 23 & 11 \\

&
& Direct TT
& 0.000 & 0.000 & 0.000 & 0.000 & 1.0000 & 57 & 31 \\

&
& Basket-CDF
& 0.071 & 0.080 & 0.103 & 0.191 & 0.9999 & 64 & 50 \\

& \multirow{3}{*}{3}
& Exact loader
& -- & -- & -- & -- & -- & 115 & 57 \\

&
& Direct TT
& 0.001 & 0.002 & 0.002 & 0.002 & 1.0000 & 135 & 72 \\

&
& Basket-CDF
& 0.089 & 0.131 & 0.195 & 0.296 & 0.9403 & 88 & 75 \\

& \multirow{3}{*}{4}
& Exact loader
& -- & -- & -- & -- & -- & 495 & 247 \\

&
& Direct TT
& 0.474 & 0.591 & 0.584 & 1.126 & 0.9986 & 209 & 111 \\

&
& Basket-CDF
& 0.429 & 0.649 & 0.972 & 1.681 & 0.7892 & 112 & 100 \\

& \multirow{3}{*}{5}
& Exact loader
& -- & -- & -- & -- & -- & 2027 & 1013 \\

&
& Direct TT
& 1.324 & 1.424 & 1.255 & 1.568 & 0.9973 & 287 & 152 \\

&
& Basket-CDF
& 0.464 & 0.682 & 1.071 & 1.702 & 0.7851 & 136 & 125 \\

& \multirow{3}{*}{6}
& Exact loader
& -- & -- & -- & -- & -- & 8167 & 4083 \\

&
& Direct TT
& 3.056 & 3.540 & 3.933 & 5.458 & 0.9950 & 418 & 203 \\

&
& Basket-CDF
& 0.391 & 0.462 & 0.986 & 1.829 & 0.7244 & 160 & 150 \\

& \multirow{3}{*}{7}
& Exact loader
& -- & -- & -- & -- & -- & 32739 & 16369 \\

&
& Direct TT
& 5.146 & 6.966 & 9.349 & 14.267 & 0.9939 & 441 & 233 \\

&
& Basket-CDF
& 0.297 & 0.589 & 0.855 & 1.140 & 0.7540 & 184 & 175 \\

& \multirow{3}{*}{8}
& Exact loader
& -- & -- & -- & -- & -- & 131039 & 65519 \\

&
& Direct TT
& 8.057 & 10.816 & 13.485 & 20.672 & 0.9920 & 521 & 275 \\

&
& Basket-CDF
& 0.922 & 1.374 & 2.527 & 4.686 & 0.8039 & 208 & 200 \\

\bottomrule
\end{tabular*}
\end{table}

\begin{table}[t]
\centering
\scriptsize
\setlength{\tabcolsep}{2pt}
\caption{Comparison between Exact loading, Direct TT, and the proposed Basket-CDF loader for $q=3$ uncertainty qubits per asset and dimensions $d=2,\ldots,5$. Market parameters are $S_0=50$, $\sigma=0.3$, $r=0.05$, and $T=1.0$. Results are shown for the equicorrelation case $\rho=0.6$ and the market-sector-style correlation case. Basket-CDF entries are multi-seed means over seeds $0,\ldots,9$ obtained with the loss weight $\lambda_B = 1000, \lambda_M = 10$, one latent qubit, $L_{\mathrm{dep}}=4$, and L-BFGS-B with 180 maximum iterations. Errors are relative pricing errors in percent.}
\label{tab:directtt_basketcdf_q3_mean}

\begin{tabular*}{\linewidth}{
    @{\extracolsep{\fill}}
    lllccccccc
    @{}
}
\toprule
Case & $d$ & Method
& $K=40$ & $K=45$ & $K=50$ & $K=55$
& Fidelity & Depth & CX \\
\midrule

\multirow{12}{*}{Equicorr ($\rho=0.6$)}
& \multirow{3}{*}{2}
& Exact loader
& -- & -- & -- & -- & -- & 115 & 57 \\

&
& Direct TT
& 0.030 & 0.019 & 0.058 & 0.146 & 0.9999 & 133 & 71 \\

&
& Basket-CDF
& 0.047 & 0.062 & 0.105 & 0.175 & 0.8571 & 92 & 84 \\

& \multirow{3}{*}{3}
& Exact loader
& -- & -- & -- & -- & -- & 1005 & 502 \\

&
& Direct TT
& 0.208 & 0.316 & 0.997 & 1.816 & 0.9943 & 252 & 132 \\

&
& Basket-CDF
& 0.054 & 0.062 & 0.133 & 0.190 & 0.6800 & 128 & 126 \\

& \multirow{3}{*}{4}
& Exact loader
& -- & -- & -- & -- & -- & 8167 & 4083 \\

&
& Direct TT
& 0.698 & 1.091 & 2.625 & 4.517 & 0.9846 & 363 & 192 \\

&
& Basket-CDF
& 0.159 & 0.178 & 0.245 & 0.343 & 0.5435 & 164 & 168 \\

& \multirow{3}{*}{5}
& Exact loader
& -- & -- & -- & -- & -- & 65505 & 32752 \\

&
& Direct TT
& 1.332 & 2.124 & 4.761 & 8.016 & 0.9701 & 482 & 253 \\

&
& Basket-CDF
& 0.210 & 0.240 & 0.319 & 0.511 & 0.4169 & 200 & 210 \\

\midrule

\multirow{12}{*}{Market-sector}
& \multirow{3}{*}{2}
& Exact loader
& -- & -- & -- & -- & -- & 115 & 57 \\

&
& Direct TT
& 0.000 & 0.000 & 0.000 & 0.000 & 1.0000 & 150 & 74 \\

&
& Basket-CDF
& 0.001 & 0.003 & 0.005 & 0.005 & 0.9999 & 92 & 84 \\

& \multirow{3}{*}{3}
& Exact loader
& -- & -- & -- & -- & -- & 1005 & 502 \\

&
& Direct TT
& 0.025 & 0.019 & 0.035 & 0.081 & 0.9998 & 281 & 136 \\

&
& Basket-CDF
& 0.056 & 0.080 & 0.131 & 0.264 & 0.8599 & 128 & 126 \\

& \multirow{3}{*}{4}
& Exact loader
& -- & -- & -- & -- & -- & 8167 & 4083 \\

&
& Direct TT
& 0.502 & 1.273 & 2.256 & 3.824 & 0.9896 & 419 & 203 \\

&
& Basket-CDF
& 0.041 & 0.167 & 0.199 & 0.211 & 0.6777 & 164 & 168 \\

& \multirow{3}{*}{5}
& Exact loader
& -- & -- & -- & -- & -- & 65505 & 32752 \\

&
& Direct TT
& 0.838 & 1.816 & 3.371 & 6.145 & 0.9613 & 482 & 253 \\

&
& Basket-CDF
& 0.107 & 0.287 & 0.364 & 0.467 & 0.4979 & 200 & 210 \\

\bottomrule
\end{tabular*}
\end{table}

\begin{table}[t]
\centering
\scriptsize
\setlength{\tabcolsep}{2pt}
\caption{Comparison between Exact loading, Direct TT, and the proposed Basket-CDF loader for $q = 4$ uncertainty qubits per asset and dimensions $d = 2,\ldots,4$. Market parameters are $S_0 = 50$, $\sigma = 0.3$, $r = 0.05$, and $T = 1.0$. Results are shown for the equicorrelation case $\rho = 0.6$ and the market-sector-style correlation case. Basket-CDF entries are multi-seed means over seeds $0,\ldots,9$ obtained with the loss weights $\lambda_B = 1000$ and $\lambda_M = 10$, one latent qubit, $L_{\mathrm{dep}} = 4$, and L-BFGS-B with 180 maximum iterations. Errors are relative pricing errors in percent.}
\label{tab:q4-full-results}

\begin{tabular*}{\linewidth}{
    @{\extracolsep{\fill}}
    lllrrrrrrr
    @{}
}
\toprule
Case & $d$ & Method
& $K=40$ & $K=45$ & $K=50$ & $K=55$
& Fidelity & Depth & CX \\
\midrule

\multirow{9}{*}{Equicorr ($\rho = 0.6$)}
& \multirow{3}{*}{2}
& Exact loader
& -- & -- & -- & -- & -- & 495 & 247 \\

&
& Direct TT
& 0.025 & 0.017 & 0.185 & 0.264 & 0.9993 & 213 & 113 \\

&
& Basket-CDF
& 0.040 & 0.030 & 0.052 & 0.127 & 0.8176 & 116 & 118 \\

& \multirow{3}{*}{3}
& Exact loader
& -- & -- & -- & -- & -- & 8167 & 4083 \\

&
& Direct TT
& 0.526 & 1.043 & 2.138 & 3.793 & 0.9849 & 363 & 192 \\

&
& Basket-CDF
& 0.037 & 0.078 & 0.074 & 0.135 & 0.6136 & 164 & 177 \\

& \multirow{3}{*}{4}
& Exact loader
& -- & -- & -- & -- & -- & 131039 & 65519 \\

&
& Direct TT
& 0.394 & 0.413 & 2.302 & 4.589 & 0.9640 & 523 & 276 \\

&
& Basket-CDF
& 0.056 & 0.155 & 0.195 & 0.173 & 0.4341 & 212 & 236 \\

\midrule

\multirow{9}{*}{Market-sector}
& \multirow{3}{*}{2}
& Exact loader
& -- & -- & -- & -- & -- & 495 & 247 \\

&
& Direct TT
& 0.000 & 0.000 & 0.000 & 0.000 & 1.0000 & 213 & 113 \\

&
& Basket-CDF
& 0.001 & 0.001 & 0.001 & 0.003 & 0.9993 & 116 & 118 \\

& \multirow{3}{*}{3}
& Exact loader
& -- & -- & -- & -- & -- & 8167 & 4083 \\

&
& Direct TT
& 0.027 & 0.043 & 0.165 & 0.375 & 0.9995 & 367 & 194 \\

&
& Basket-CDF
& 0.064 & 0.037 & 0.092 & 0.227 & 0.8563 & 164 & 177 \\

& \multirow{3}{*}{4}
& Exact loader
& -- & -- & -- & -- & -- & 131039 & 65519 \\

&
& Direct TT
& 0.256 & 1.154 & 2.495 & 4.508 & 0.9826 & 519 & 274 \\

&
& Basket-CDF
& 0.052 & 0.044 & 0.089 & 0.156 & 0.6197 & 212 & 236 \\

\bottomrule
\end{tabular*}
\end{table}

\begin{table}[t]
\centering
\scriptsize
\setlength{\tabcolsep}{2pt}
\caption{Sample-target Basket-CDF loader results for $q=3$ uncertainty qubits per asset and dimensions $d=2,\ldots,5$. Market parameters are $S_0=50$, $\sigma=0.3$, $r=0.05$, and $T=1.0$. Results are shown for the equicorrelation case $\rho=0.6$ and the market-sector-style correlation case. Entries are multi-seed means over seeds $0,\ldots,4$ obtained with the loss weight $\lambda_B = 1000, \lambda_M = 10$, one latent qubit, $L_{\mathrm{dep}}=4$, sample size $N=10000$, and L-BFGS-B with 180 maximum iterations. Errors are relative pricing errors in percent with respect to the exact discretized reference.}
\label{tab:sampletarget_basketcdf_q3_mean}

\begin{tabular*}{\linewidth}{
    @{\extracolsep{\fill}}
    llccccccc
    @{}
}
\toprule
Case & $d$
& $K=40$ & $K=45$ & $K=50$ & $K=55$
& Fidelity & Depth & CX \\
\midrule

\multirow{4}{*}{Equicorr ($\rho=0.6$)}
& 2 & 0.837 & 1.067 & 1.413 & 1.812 & 0.8245 & 92 & 84 \\
& 3 & 0.766 & 1.062 & 1.501 & 1.968 & 0.7330 & 128 & 126 \\
& 4 & 0.368 & 0.459 & 0.723 & 0.630 & 0.5701 & 164 & 168 \\
& 5 & 1.458 & 1.882 & 2.498 & 2.779 & 0.4145 & 200 & 210 \\

\midrule

\multirow{4}{*}{Market-sector}
& 2 & 0.579 & 0.893 & 1.261 & 1.799 & 0.9963 & 92 & 84 \\
& 3 & 0.959 & 1.238 & 1.767 & 2.576 & 0.8673 & 128 & 126 \\
& 4 & 0.504 & 0.662 & 0.988 & 1.394 & 0.6454 & 164 & 168 \\
& 5 & 0.992 & 1.275 & 1.537 & 2.025 & 0.4681 & 200 & 210 \\

\bottomrule
\end{tabular*}
\end{table}

\begin{table}[t]
\centering
\scriptsize
\setlength{\tabcolsep}{2pt}
\caption{Bond-dimension sweep for Direct TT/MPS loading with $q=3$ uncertainty qubits per asset and dimensions $d=2,\ldots,5$. Market parameters are $S_0=50$, $\sigma=0.3$, $r=0.05$, and $T=1.0$. Results are shown for the equicorrelation case $\rho=0.6$ and the market-sector-style correlation case. The TT bond-dimension cap is denoted by $\chi$. Errors are relative pricing errors in percent with respect to the exact discretized reference. Entries below $0.001\%$ are reported as $<0.001$.}
\label{tab:directtt_rank_sweep_q3}

\begin{tabular*}{\linewidth}{
    @{\extracolsep{\fill}}
    lllccccccc
    @{}
}
\toprule
Case & $d$ & $\chi$
& $K=40$ & $K=45$ & $K=50$ & $K=55$
& Fidelity & Depth & CX \\
\midrule

\multirow{8}{*}{Equicorr ($\rho=0.6$)}
& \multirow{2}{*}{2}
& 8
& $<0.001$ & $<0.001$ & $<0.001$ & $<0.001$
& 1.0000 & 285 & 148 \\

&
& 16
& $<0.001$ & $<0.001$ & $<0.001$ & $<0.001$
& 1.0000 & 285 & 148 \\

& \multirow{2}{*}{3}
& 8
& 0.005 & 0.005 & 0.026 & 0.036
& 0.9999 & 851 & 436 \\

&
& 16
& $<0.001$ & $<0.001$ & $<0.001$ & $<0.001$
& 1.0000 & 2147 & 1092 \\

& \multirow{2}{*}{4}
& 8
& 0.019 & 0.045 & 0.096 & 0.147
& 0.9994 & 1413 & 726 \\

&
& 16
& $<0.001$ & $<0.001$ & $<0.001$ & $<0.001$
& 1.0000 & 4657 & 2366 \\

& \multirow{2}{*}{5}
& 8
& 0.041 & 0.123 & 0.223 & 0.375
& 0.9983 & 1975 & 1013 \\

&
& 16
& $<0.001$ & $<0.001$ & $<0.001$ & $<0.001$
& 1.0000 & 7156 & 3637 \\

\midrule

\multirow{8}{*}{Market-sector}
& \multirow{2}{*}{2}
& 8
& $<0.001$ & $<0.001$ & $<0.001$ & $<0.001$
& 1.0000 & 285 & 148 \\

&
& 16
& $<0.001$ & $<0.001$ & $<0.001$ & $<0.001$
& 1.0000 & 285 & 148 \\

& \multirow{2}{*}{3}
& 8
& $<0.001$ & $<0.001$ & $<0.001$ & $<0.001$
& 1.0000 & 845 & 435 \\

&
& 16
& $<0.001$ & $<0.001$ & $<0.001$ & $<0.001$
& 1.0000 & 2148 & 1092 \\

& \multirow{2}{*}{4}
& 8
& 0.026 & 0.051 & 0.110 & 0.172
& 0.9996 & 1409 & 727 \\

&
& 16
& $<0.001$ & $<0.001$ & $<0.001$ & $<0.001$
& 1.0000 & 4638 & 2367 \\

& \multirow{2}{*}{5}
& 8
& 0.063 & 0.084 & 0.098 & 0.261
& 0.9974 & 1975 & 1015 \\

&
& 16
& 0.002 & 0.002 & 0.002 & 0.008
& 1.0000 & 7145 & 3637 \\

\bottomrule
\end{tabular*}
\end{table}

\begin{figure}[t]
\centering
\includegraphics[
    width=0.62\textwidth
]{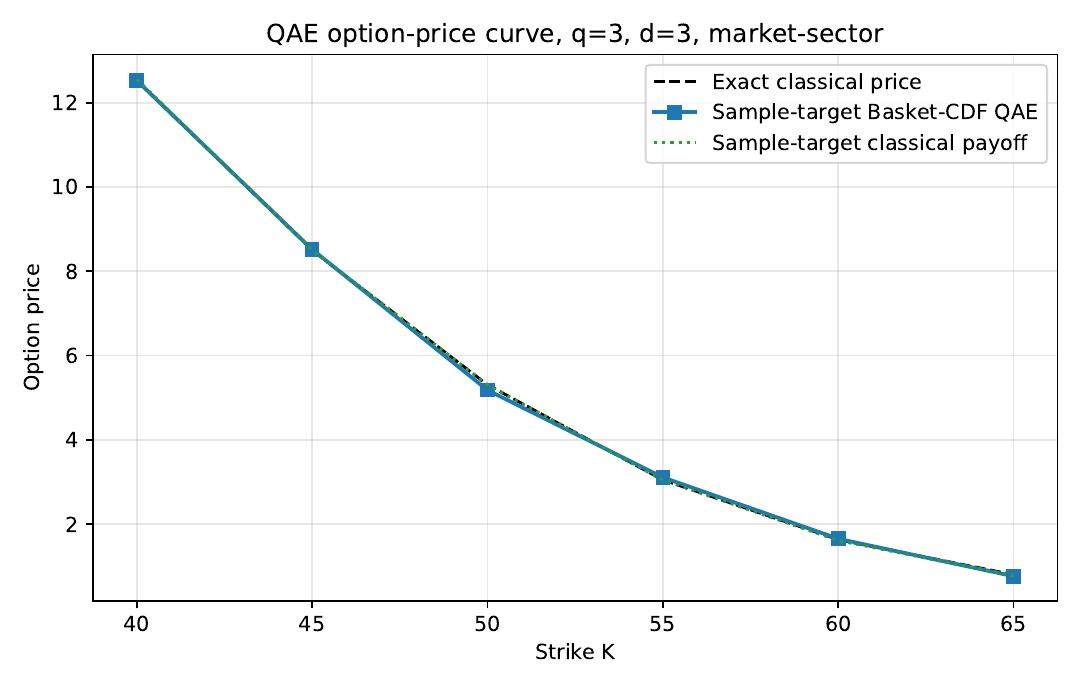}
\caption{QAE integration experiment using a Basket-CDF loader obtained from sampling-based training. The exact discretized price is computed from the target distribution, the classical payoff curve is computed from the distribution induced by the sampling-trained loader, and the QAE estimates are obtained by inserting the same learned loader into a standard QAE-based pricing workflow.}
\label{fig:qae_sanity_check}
\end{figure}

\FloatBarrier
    
    \section{Empirical-scenario Robustness}
    \label{app:empirical_scenario}
    
    This appendix reports an empirical-scenario evaluation of the Basket-CDF
    state-preparation principle. The goal is to test whether the loader
    can prepare a quantum state whose induced basket distribution matches an
    empirical scenario distribution, rather than a closed-form model-based
    probability mass function. This experiment is not intended as an arbitrage-free
    option-pricing calibration: historical return samples are observed under the
    physical measure, whereas market option prices are risk-neutral expectations.
    
    We use daily historical returns for four liquid U.S. equities, AAPL, MSFT, JPM,
    and XOM, over the period from May 17, 2021 to May 17, 2026.\footnote{
    Historical price data were downloaded using the \texttt{yfinance} Python
    package. 
    } Each asset is
    normalized to an initial value of 100, and each daily return observation is
    converted into a one-period terminal scenario. For asset \(i\), the empirical
    terminal values are discretized into \(2^q\) bins using marginal quantile
    binning, with \(q=3\) in this experiment. This produces an empirical four-asset
    grid distribution over \(2^{qd}=2^{12}\) basis states. The one-asset marginals
    are estimated directly from the binned samples, and the target Basket-CDF is
    estimated from the empirical basket values
    $B(x)=\frac{1}{d}\sum_{i=1}^d S_i(x_i).$

    Thus, the training targets are empirical marginal histograms and empirical
    basket-CDF values, rather than a closed-form joint probability mass function.
    
    The same two-stage architecture is used as in the model-based experiments.
    First, the asset-wise marginal distributions are loaded by TT-informed marginal
    circuits. Second, a one-qubit latent dependence block is trained using the
    Basket-CDF and marginal losses. We use the single-pass shallow dependence block,
    in which each layer applies one latent-to-asset controlled-rotation fanout
    followed by local asset mixing, without a second latent injection in the same
    layer. The resulting circuit is evaluated on the empirical basket payoff curve
    \[
    \mathbb{E}\bigl[(B-K)^+\bigr],
    \qquad
    K\in\{96,97,98,99,100\}.
    \]
    Again, this quantity should be interpreted as a historical scenario payoff
    functional, not as a risk-neutral option price.
    
    \begin{figure}[t]
    \centering
    \begin{subfigure}{0.48\textwidth}
        \centering
        \includegraphics[width=\textwidth]{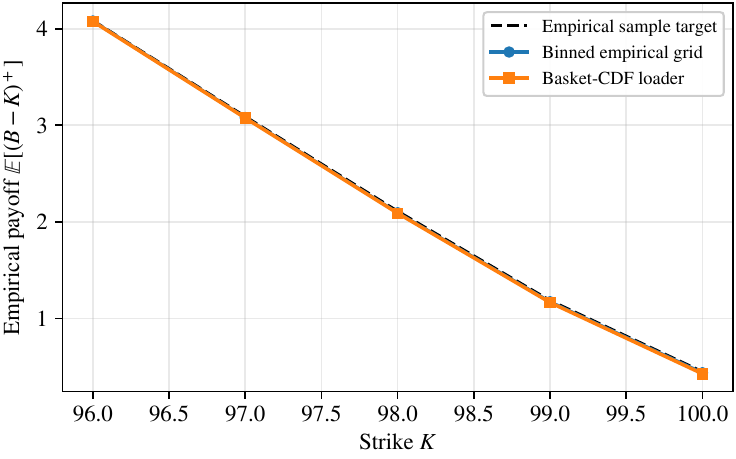}
        \caption{Empirical payoff curve.}
        \label{fig/fig:data_driven_payoff_curve}
    \end{subfigure}
    \hfill
    \begin{subfigure}{0.48\textwidth}
        \centering
        \includegraphics[width=\textwidth]{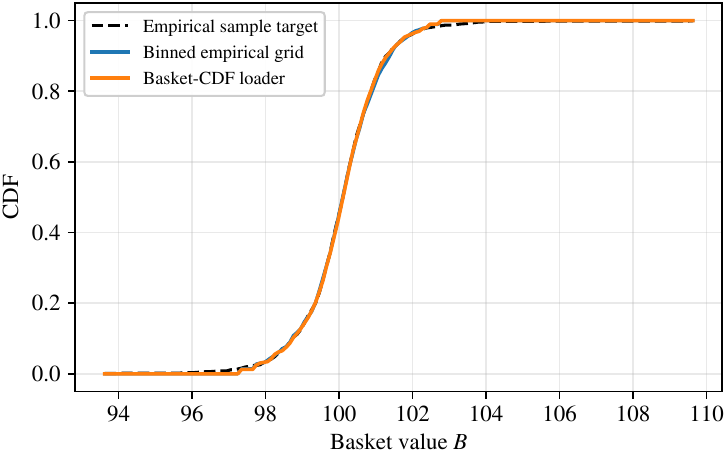}
        \caption{Empirical basket CDF.}
        \label{fig:data_driven_basket_cdf_curve}
    \end{subfigure}
    \caption{
    Empirical-scenario Basket-CDF state preparation using historical equity return
    scenarios. Daily returns of AAPL, MSFT, JPM, and XOM from May 17, 2021 to
    May 17, 2026 are converted into empirical terminal scenarios and binned onto a
    $q=3$, $d=4$ asset grid. The loader is trained from empirical marginal
    histograms and empirical Basket-CDF values. The experiment is intended as a
    scenario-distribution loading test, not as a risk-neutral option-pricing
    calibration.
    }
    \label{fig:data_driven_basket_cdf}
    \end{figure}
    
    Figure~\ref{fig:data_driven_basket_cdf} shows that the trained state reproduces
    both the empirical payoff curve and the empirical basket CDF closely. Over the
    five strikes \(K=96,\ldots,100\), the single-pass latent block achieves a mean
    relative payoff-curve error of \(1.56\%\) and a maximum relative error of
    \(4.57\%\) against the empirical sample payoff curve. The corresponding
    state-preparation circuit has depth 164 and 168 CX gates. These results indicate
    that the proposed Basket-CDF objective can be constructed from empirical
    samples while retaining accurate basket-level behavior in the prepared state. Thus, this experiment should be viewed as a scenario-distribution loading test.
It shows that the Basket-CDF mechanism can be trained from sample-based
financial targets, with possible use in payoff curves, tail probabilities, VaR,
or CVaR-type risk measures.
\end{appendices}


\bibliography{sn-bibliography}

@article{stamatopoulos2020option,
  title={Option pricing using quantum computers},
  author={Stamatopoulos, Nikitas and Egger, Daniel J and Sun, Yue and Zoufal, Christa and Iten, Raban and Shen, Ning and Woerner, Stefan},
  journal={Quantum},
  volume={4},
  pages={291},
  year={2020},
  publisher={Verein zur F{\"o}rderung des Open Access Publizierens in den Quantenwissenschaften},
  note = {\url{https://doi.org/10.22331/q-2020-07-06-291}}
}

@article{zoufal2019quantum,
  title={Quantum generative adversarial networks for learning and loading random distributions},
  author={Zoufal, Christa and Lucchi, Aur{\'e}lien and Woerner, Stefan},
  journal={npj Quantum Information},
  volume={5},
  number={1},
  pages={103},
  year={2019},
  publisher={Nature Publishing Group UK London},
  note = {\url{https://doi.org/10.1038/s41534-019-0223-2}}
}

@article{herman2026quantum,
  title={Quantum Speedups for Derivative Pricing Beyond {Black-Scholes}},
  author={Herman, Dylan and Sun, Yue and Liu, Jin-Peng and Pistoia, Marco and Che, Charlie and Otter, Rob and Chakrabarti, Shouvanik and Harrow, Aram},
  journal={arXiv preprint arXiv:2602.03725},
  year={2026},
  note = {\url{ 	
https://doi.org/10.48550/arXiv.2602.03725}}
}

@article{alcazar2022quantum,
  title={Quantum algorithm for credit valuation adjustments},
  author={Alcazar, Javier and Cadarso, Andrea and Katabarwa, Amara and Mauri, Marta and Peropadre, Borja and Wang, Guoming and Cao, Yudong},
  journal={New Journal of Physics},
  volume={24},
  number={2},
  pages={023036},
  year={2022},
  publisher={IOP Publishing},
  note = {\url{https://doi.org/10.1088/1367-2630/ac5003}}
}

@article{iaconis2024quantum,
  title={Quantum state preparation of normal distributions using matrix product states},
  author={Iaconis, Jason and Johri, Sonika and Zhu, Elton Yechao},
  journal={npj Quantum Information},
  volume={10},
  number={1},
  pages={15},
  year={2024},
  publisher={Nature Publishing Group UK London},
  note = {\url{https://doi.org/10.1038/s41534-024-00805-0}}
}

@article{yu2026quantum,
  title={Quantum advantage for multi-option portfolio pricing and valuation adjustments},
  author={Han, Jeong Yu and Cheng, Bin and Vu, Dinh-Long
             and Rebentrost, Patrick},
  journal={Quantitative Finance},
  volumn = {26},
  number = {3},
  pages = {467--489},
  year={2026},
  publisher={Taylor \& Francis},
  note = {\url{https://doi.org/10.1080/14697688.2026.2614573}}
}

@article{melnikov2023quantum,
  title={Quantum state preparation using tensor networks},
  author={Melnikov, Ar A and Termanova, Alena A and Dolgov, Sergey V and Neukart, Florian and Perelshtein, MR},
  journal={Quantum Science and Technology},
  volume={8},
  number={3},
  pages={035027},
  year={2023},
  publisher={IOP Publishing},
  note = {\url{https://doi.org/10.1088/2058-9565/acd9e7}}
}

@article{cerezo2021variational,
  title={Variational quantum algorithms},
  author={Cerezo, Marco and Arrasmith, Andrew and Babbush, Ryan and Benjamin, Simon C and Endo, Suguru and Fujii, Keisuke and McClean, Jarrod R and Mitarai, Kosuke and Yuan, Xiao and Cincio, Lukasz and others},
  journal={Nature Reviews Physics},
  volume={3},
  number={9},
  pages={625--644},
  year={2021},
  publisher={Nature Publishing Group UK London},
  note = {\url{https://doi.org/10.1038/s42254-021-00348-9}}
}

@article{ran2020encoding,
  title={Encoding of matrix product states into quantum circuits of one-and two-qubit gates},
  author={Ran, Shi-Ju},
  journal={Physical Review A},
  volume={101},
  number={3},
  pages={032310},
  year={2020},
  publisher={APS},
  note = {\url{https://doi.org/10.1103/PhysRevA.101.032310}}
}

@article{grinko2021iterative,
  title={Iterative quantum amplitude estimation},
  author={Grinko, Dmitry and Gacon, Julien and Zoufal, Christa and Woerner, Stefan},
  journal={npj Quantum Information},
  volume={7},
  number={1},
  pages={52},
  year={2021},
  publisher={Nature Publishing Group UK London},
  note = {\url{https://doi.org/10.1038/s41534-021-00379-1}}
}

@article{pereira2024encoding,
  title={Encoding of Probability Distributions for Quantum {Monte Carlo} Using Tensor Networks},
  author={Pereira, Antonio and Villarino, Alba and Cortines, Aser and Mugel, Samuel and Orus, Roman and Beltran, Victor Leme and Scursulim, JVS and Brito, Samurai},
  journal={arXiv preprint arXiv:2411.11660},
  year={2024},
  note = {\url{ 	
https://doi.org/10.48550/arXiv.2411.11660}}
}

@article{aaronson2015read,
  title={Read the fine print},
  author={Aaronson, Scott},
  journal={Nature Physics},
  volume={11},
  number={4},
  pages={291--293},
  year={2015},
  publisher={Nature Publishing Group UK London},
  note = {\url{https://doi.org/10.1038/nphys3272}}
}

@article{brassard2002quantum,
  title={Quantum amplitude amplification and estimation},
  author={Brassard, Gilles and Hoyer, Peter and Mosca, Michele and Tapp, Alain},
  journal={arXiv preprint quant-ph/0005055},
  year={2000},
  note = {\url{https://doi.org/10.48550/arXiv.quant-ph/0005055
}}
}

@article{suzuki2020amplitude,
  title={Amplitude estimation without phase estimation},
  author={Suzuki, Yohichi and Uno, Shumpei and Raymond, Rudy and Tanaka, Tomoki and Onodera, Tamiya and Yamamoto, Naoki},
  journal={Quantum Information Processing},
  volume={19},
  number={2},
  pages={75},
  year={2020},
  publisher={Springer},
  note = {\url{https://doi.org/10.1007/s11128-019-2565-2}}
}

@article{preskill2018quantum,
  title={Quantum computing in the {NISQ} era and beyond},
  author={Preskill, John},
  journal={Quantum},
  volume={2},
  pages={79},
  year={2018},
  publisher={Verein zur F{\"o}rderung des Open Access Publizierens in den Quantenwissenschaften},
  note = {\url{https://doi.org/10.22331/q-2018-08-06-79}}
}

@inproceedings{murali2019noise,
  title={Noise-adaptive compiler mappings for noisy intermediate-scale quantum computers},
  author={Murali, Prakash and Baker, Jonathan M and Javadi-Abhari, Ali and Chong, Frederic T and Martonosi, Margaret},
  booktitle={Proceedings of the twenty-fourth international conference on architectural support for programming languages and operating systems},
  pages={1015--1029},
  year={2019},
  note = {\url{https://doi.org/10.1145/3297858.3304075}}
}

@article{kandala2017hardware,
  title={Hardware-efficient variational quantum eigensolver for small molecules and quantum magnets},
  author={Kandala, Abhinav and Mezzacapo, Antonio and Temme, Kristan and Takita, Maika and Brink, Markus and Chow, Jerry M and Gambetta, Jay M},
  journal={Nature},
  volume={549},
  number={7671},
  pages={242--246},
  year={2017},
  publisher={Nature Publishing Group},
  note = {\url{https://doi.org/10.1038/nature23879}}
}

@article{tkachenko2021correlation,
  title={Correlation-informed permutation of qubits for reducing ansatz depth in the variational quantum eigensolver},
  author={Tkachenko, Nikolay V and Sud, James and Zhang, Yu and Tretiak, Sergei and Anisimov, Petr M and Arrasmith, Andrew T and Coles, Patrick J and Cincio, Lukasz and Dub, Pavel A},
  journal={PRX Quantum},
  volume={2},
  number={2},
  pages={020337},
  year={2021},
  publisher={APS},
  note = {\url{https://doi.org/10.1103/PRXQuantum.2.020337}}
}

@article{woerner2019quantum,
  title={Quantum risk analysis},
  author={Woerner, Stefan and Egger, Daniel J},
  journal={npj Quantum Information},
  volume={5},
  number={1},
  pages={15},
  year={2019},
  publisher={Nature Publishing Group UK London},
  note = {\url{https://doi.org/10.1038/s41534-019-0130-6}}
}

@article{manzano2025alternative,
  title={Alternative pipeline for option pricing using quantum computers},
  author={Manzano, Alberto and Ferro, Gonzalo and Leitao, {\'A}lvaro and V{\'a}zquez, Carlos and G{\'o}mez, Andr{\'e}s},
  journal={EPJ Quantum Technology},
  volume={12},
  pages={28},
  year={2025},
  publisher={Springer},
  note = {\url{https://doi.org/10.1140/epjqt/s40507-025-00328-3}}
}

@article{oseledets2011tensor,
  title={Tensor-train decomposition},
  author={Oseledets, Ivan V},
  journal={SIAM Journal on Scientific Computing},
  volume={33},
  number={5},
  pages={2295--2317},
  year={2011},
  publisher={SIAM},
  note = {\url{https://doi.org/10.1137/090752286}}
}

@article{kaneko2022quantum,
  title={Quantum pricing with a smile: implementation of local volatility model on quantum computer},
  author={Kaneko, Kazuya and Miyamoto, Koichi and Takeda, Naoyuki and Yoshino, Kazuyoshi},
  journal={EPJ Quantum Technology},
  volume={9},
  pages={7},
  year={2022},
  publisher={Springer},
  note = {\url{https://doi.org/10.1140/epjqt/s40507-022-00125-2}}
}

@article{miyamoto2022bermudan,
  title={Bermudan option pricing by quantum amplitude estimation and {Chebyshev} interpolation},
  author={Miyamoto, Koichi},
  journal={EPJ Quantum Technology},
  volume={9},
  pages={3},
  year={2022},
  publisher={Springer},
  note = {\url{https://doi.org/10.1140/epjqt/s40507-022-00124-3}}
}

@article{manzano2023real,
  title={Real quantum amplitude estimation},
  author={Manzano, Alberto and Musso, Daniele and Leitao, {\'A}lvaro},
  journal={EPJ Quantum Technology},
  volume={10},
  pages={2},
  year={2023},
  publisher={Springer},
  note = {\url{https://doi.org/10.1140/epjqt/s40507-023-00159-0}}
}

@article{miyamoto2024bias,
  title={On the bias in iterative quantum amplitude estimation},
  author={Miyamoto, Koichi},
  journal={EPJ Quantum Technology},
  volume={11},
  pages={42},
  year={2024},
  publisher={Springer},
  note = {\url{https://doi.org/10.1140/epjqt/s40507-024-00253-x}}
}

@article{benedetti2019generative,
  title={A generative modeling approach for benchmarking and training shallow quantum circuits},
  author={Benedetti, Marcello and Garcia-Pintos, Delfina and Perdomo, Oscar and Leyton-Ortega, Vicente and Nam, Yunseong and Perdomo-Ortiz, Alejandro},
  journal={npj Quantum Information},
  volume={5},
  number={1},
  pages={45},
  year={2019},
  publisher={Nature Publishing Group UK London},
  note = {\url{https://doi.org/10.1038/s41534-019-0157-8}}
}

@article{matheson1976scoring,
  title={Scoring rules for continuous probability distributions},
  author={Matheson, James E and Winkler, Robert L},
  journal={Management Science},
  volume={22},
  number={10},
  pages={1087--1096},
  year={1976},
  publisher={INFORMS},
  note = {\url{https://doi.org/10.1287/mnsc.22.10.1087}}
}

@article{gneiting2007strictly,
  title={Strictly proper scoring rules, prediction, and estimation},
  author={Gneiting, Tilmann and Raftery, Adrian E},
  journal={Journal of the American Statistical Association},
  volume={102},
  number={477},
  pages={359--378},
  year={2007},
  publisher={Taylor \& Francis},
  note = {\url{https://doi.org/10.1198/016214506000001437}}
}

@article{caldana2016general,
  title={General closed-form basket option pricing bounds},
  author={Caldana, Ruggero and Fusai, Gianluca and Gnoatto, Alessandro and Grasselli, Martino},
  journal={Quantitative Finance},
  volume={16},
  number={4},
  pages={535--554},
  year={2016},
  publisher={Taylor \& Francis},
  note = {\url{https://doi.org/10.1080/14697688.2015.1073854}}
}

@article{rosenberg1973prediction,
  title={The prediction of systematic and specific risk in common stocks},
  author={Rosenberg, Barr and McKibben, Walt},
  journal={Journal of Financial and Quantitative Analysis},
  volume={8},
  number={2},
  pages={317--333},
  year={1973},
  publisher={Cambridge University Press},
  note = {\url{https://doi.org/10.2307/2330027 }}
}

@article{fama1993common,
  title={Common risk factors in the returns on stocks and bonds},
  author={Fama, Eugene F and French, Kenneth R},
  journal={Journal of Financial Economics},
  volume={33},
  number={1},
  pages={3--56},
  year={1993},
  publisher={Elsevier},
  note = {\url{https://doi.org/10.1016/0304-405X(93)90023-5}}
}

@article{carhart1997persistence,
  title={On persistence in mutual fund performance},
  author={Carhart, Mark M},
  journal={The Journal of Finance},
  volume={52},
  number={1},
  pages={57--82},
  year={1997},
  publisher={Wiley Online Library},
  note = {\url{https://doi.org/10.1111/j.1540-6261.1997.tb03808.x}}
}

@article{carrera2021efficient,
  title={Efficient state preparation for quantum amplitude estimation},
  author={Carrera Vazquez, Almudena and Woerner, Stefan},
  journal={Physical Review Applied},
  volume={15},
  number={3},
  pages={034027},
  year={2021},
  publisher={APS},
  note = {\url{https://doi.org/10.1103/PhysRevApplied.15.034027}}
}

@article{gomez2022survey,
  title={A {Survey} on {Quantum Computational Finance} for {Derivatives} {Pricing} and {VaR}},
  author={G{\'o}mez, Andr{\'e}s and Leitao, {\'A}lvaro and Manzano, Alberto and Musso, Daniele and Nogueiras, Mar{\'\i}a R and Ord{\'o}{\~n}ez, Gustavo and V{\'a}zquez, Carlos},
  journal={Archives of {Computational} {Methods} in {Engineering}},
  volume={29},
  number={6},
  pages={4137--4163},
  year={2022},
  publisher={Springer},
  note = {\url{https://doi.org/10.1007/s11831-022-09732-9}}
}

@article{rattew2021efficient,
  title={The efficient preparation of normal distributions in quantum registers},
  author={Rattew, Arthur G and Sun, Yue and Minssen, Pierre and Pistoia, Marco},
  journal={Quantum},
  volume={5},
  pages={609},
  year={2021},
  publisher={Verein zur F{\"o}rderung des Open Access Publizierens in den Quantenwissenschaften},
  note = {\url{https://doi.org/10.22331/q-2021-12-23-609}}
}

@inproceedings{mori2024quantum,
  title={Quantum algorithm for copula-based risk aggregation using orthogonal series density estimation},
  author={Mori, Hitomi and Miyamoto, Koichi},
  booktitle={2024 IEEE International Conference on Quantum Computing and Engineering (QCE)},
  volume={1},
  pages={322--330},
  year={2024},
  organization={IEEE},
  note = {\url{https://doi.org/10.1109/QCE60285.2024.00046}}
}

@article{zhu2023copula,
  title={Copula-based risk aggregation with trapped ion quantum computers},
  author={Zhu, Daiwei and Shen, Weiwei and Giani, Annarita and Ray-Majumder, Saikat and Neculaes, Bogdan and Johri, Sonika},
  journal={Scientific Reports},
  volume={13},
  number={1},
  pages={18511},
  year={2023},
  publisher={Nature Publishing Group UK London},
  note = {\url{https://doi.org/10.1038/s41598-023-44151-1}}
}

@article{zhu2022generative,
  title={Generative quantum learning of joint probability distribution functions},
  author={Zhu, Elton Yechao and Johri, Sonika and Bacon, Dave and Esencan, Mert and Kim, Jungsang and Muir, Mark and Murgai, Nikhil and Nguyen, Jason and Pisenti, Neal and Schouela, Adam and others},
  journal={Physical Review Research},
  volume={4},
  number={4},
  pages={043092},
  year={2022},
  publisher={APS},
  note = {\url{https://doi.org/10.1103/PhysRevResearch.4.043092}}
}

@article{dvoretzky1956asymptotic,
  title={Asymptotic minimax character of the sample distribution function and of the classical multinomial estimator},
  author={Dvoretzky, Aryeh and Kiefer, Jack and Wolfowitz, Jacob},
  journal={The Annals of Mathematical Statistics},
  volume  = {27},
  number  = {3},
  pages={642--669},
  year={1956},
  publisher={JSTOR}
}

@article{massart1990tight,
  title={The tight constant in the {Dvoretzky-Kiefer-Wolfowitz} inequality},
  author={Massart, Pascal},
  journal={The Annals of Probability},
  volume  = {18},
  number  = {3},
  pages={1269--1283},
  year={1990},
  publisher={JSTOR},
  note = {\url{https://www.jstor.org/stable/2244426}}
}

@article{kolla2019concentration,
  title={Concentration bounds for empirical conditional value-at-risk: The unbounded case},
  author={Kolla, Ravi Kumar and Prashanth, LA and Bhat, Sanjay P and Jagannathan, Krishna},
  journal={Operations Research Letters},
  volume={47},
  number={1},
  pages={16--20},
  year={2019},
  publisher={Elsevier},
  note = {\url{https://doi.org/10.1016/j.orl.2018.11.005}}
}

@article{rudolph2024trainability,
  title={Trainability barriers and opportunities in quantum generative modeling},
  author={Rudolph, Manuel S and Lerch, Sacha and Thanasilp, Supanut and Kiss, Oriel and Shaya, Oxana and Vallecorsa, Sofia and Grossi, Michele and Holmes, Zo{\"e}},
  journal={npj Quantum Information},
  volume={10},
  number={1},
  pages={116},
  year={2024},
  publisher={Nature Publishing Group UK London},
  note = {\url{https://doi.org/10.1038/s41534-024-00902-0}}
}

@article{tian2025quantum,
  title={Quantum generative adversarial network with automated noise suppression mechanism based on {WGAN-GP}},
  author={Tian, Yanbing and Tian, Cewen and Fan, Zaixu and Fu, Minghao and Ma, Hongyang},
  journal={EPJ Quantum Technology},
  volume={12},
  pages={80},
  year={2025},
  publisher={Springer},
  note = {\url{https://doi.org/10.1140/epjqt/s40507-025-00372-z}}
}

@article{blank2021quantum,
  title={Quantum-enhanced analysis of discrete stochastic processes},
  author={Blank, Carsten and Park, Daniel K. and Petruccione, Francesco},
  journal={npj Quantum Information},
  volume={7},
  number={1},
  pages={126},
  year={2021},
  publisher={Nature Publishing Group},
  note = {\url{https://doi.org/10.1038/s41534-021-00459-2}}
}

\end{document}